\newtheorem{thm}{Theorem}[section]
\newtheorem{lem}[thm]{Lemma}
\newtheorem{rem}[thm]{Remark}
\newtheorem{cor}[thm]{Corollary}
\newtheorem{prop}[thm]{Proposition}
\numberwithin{equation}{section}
\newcommand{\ve}{{\varepsilon}}
\begin{document}
\title{Limit Theorems for Partial Hedging Under Transaction Costs}
 \author{Yan Dolinsky\\
 Department of Mathematics\\
 ETH, Zurich\\
 Switzerland }%

\address{
 Department of Mathematics, ETH, Zurich 8092, Switzerland\\
 {e.mail: yan.dolinsky@math.ethz.ch}}

 \date{\today}
\begin{abstract}
We study shortfall risk minimization for American options with path
dependent payoffs
under proportional transaction costs in the Black--Scholes (BS) model.
We show that for this case
the shortfall risk is a limit of similar terms in an appropriate sequence of binomial models.
We also
prove that in the continuous time BS model for a given initial capital
there exists a portfolio strategy which minimizes the shortfall risk.
In the absence of transactions costs (complete markets)
similar limit theorems were obtained in Dolinsky and Kifer (2008, 2010)
for game options. In the presence of transaction costs the markets are no
longer complete and additional machinery required. Shortfall risk minimization
for American options under transaction costs was not studied before.
\end{abstract}

\subjclass[2000]{Primary: 91B28 Secondary: 60F15, 91B30}%
\keywords{American options, shortfall risk, transaction costs.}%
\maketitle
\markboth{Y.Dolinsly}{Limit Theorems Under Transaction Costs}
\renewcommand{\theequation}{\arabic{section}.\arabic{equation}}
\pagenumbering{arabic}
\section{Introduction}\label{sec:1}\setcounter{equation}{0}
This paper deals with shortfall risk minimization for American options
under proportional transaction costs. It is well known
that in a complete market an American contingent claim can
be hedged perfectly with an initial capital which is equal to the
 optimal stopping value of the discounted payoff under the unique
martingale measure. In the presence of transaction costs the market
is no longer complete and the initial capital required for perfect
hedging (superhedging price) of the options is often too high.
In fact, several authors, see for example, Soner, Shreve and
Cvitanic (1995), Levental and Skorohod (1997) and Cvitanic, Pham and Touzi (1999)
showed that the superhedging price of European call
options (also of American call options) in the BS model is
equal to the price of buying the stock at the time the option
is purchased.
In Jakubenas, Levental, and Ryznar (2003)
these results were extended to path dependent options.
For example, it was demonstrated that
for European and American options (in the BS model) with Russian type of payoffs
the superhedging price
is infinite, i.e., perfect hedging is not available.
Thus with the presence of transaction costs
it is reasonable to assume that the seller's (investor's) initial
capital is less than the superhedging price.
In this case, the seller is
ready to accept a risk that his portfolio value at an exercise time
may be less than his obligation
to pay and he will need additional
funds to fullfil the contract. This leads to the natural question
of minimization of risk for a given amount of initial capital.
In order to make this question precise we need to define explicitly the risk measure.

We deal with a certain type of risk called the
shortfall risk, which is defined for American options
as the maximal expectation
with respect to the buyer exercise times of the discounted shortfall
(see Mulinacci 2010).
In the presence of transaction costs
the problem of shortfall risk minimization
was studied only for European options,
see Guasoni (2002A, 2002B), Komizono (2001, 2003)
and
Trivellato (2009). The first two authors considered a
general setup for which they proved that for
a given initial capital there exists a portfolio
strategy which minimizes the shortfall risk. In
Trivellato (2009) shortfall risk
minimization is studied for European options in a binomial
model and it is shown that for a given initial capital,
the shortfall risk and the corresponding optimal portfolio
can be calculated by dynamical programming algorithm.

In this paper we study shortfall risk
minimization for a cash--settled American options in the BS model.
We consider path dependent payoffs with some regularity conditions.
We
allow only self financing portfolios which satisfy
the no--bankruptcy condition i.e., a portfolios
with nonnegative wealth process.
This corresponds to the situation when the
portfolio is handled without borrowing of the capital.
By using convexity of the shortfall risk
measure, we will show that for a given initial capital
there exists a portfolio strategy which minimizes the risk.
From practical view point, existence results are not sufficient,
an investor with a fixed initial capital
want to compute the minimal possible shortfall risk
and to find explicitly a portfolio strategy which
minimizes or "almost"
minimizes the shortfall risk.
For binomial models
the above problems can be solved by
dynamical programming algorithm.
Our approach is to use an appropriate
sequence of binomial models in order
to approximate the shortfall risk
and to construct "almost" optimal portfolios
in the BS model. Namely, we will show that under proportional
transaction costs the shortfall risk in the BS model
is a limit of similar terms with the
same proportional transaction costs in
an appropriate sequence of
binomial models. Furthermore we will use the optimal portfolios in
the binomial models in order to construct "almost" optimal portfolios
for the BS model.

Similar results were obtained
in Dolinsky and Kifer (2008, 2010)
for game options
without the presence of
transaction costs. The proof of the results there relied
heavily on the completeness of the markets, which is
no longer the case with the presence
of transaction costs.

The main auxiliary result which is crucial for proving the limit
theorems in our setup is the stability of the shortfall as a function of the transaction costs parameters $\lambda,\mu$.
This result may be also of some independent interest. In particular
we will see that as $\mu,\lambda\downarrow 0$,
the shortfall risks converge to the shortfall risk of the complete market.
Note that for the superhedging prices this is not true
in general.
For instance, the call option superhedging prices converge (as
$\mu,\lambda\downarrow 0$) to the initial stock price which is
bigger than the call option price in the complete BS market. The same occurs for American
options with Russian type of payoffs. In this case the
limit of the superhedging prices (as $\mu,\lambda\downarrow 0$) is infinity.

The paper is organized as following. Main results of this paper are formulated in the next section.
In Section 3 we analyze the binomial models and
provide a dynamical programming algorithm for the shortfall risk
and the corresponding optimal portfolios.
In Section 4 we complete the proof of the limit theorems (Theorems 2.2--2.3).
In Section 5 we prove Theorem 2.1
which provide an existence result for the optimal portfolio in the BS model.

\section{Preliminaries and main results}\label{sec:2}\setcounter{equation}{0}
Consider a complete probability space
($\Omega_W$, $P^{W}$) together with a standard
one--dimensional continuous in time Brownian motion
\{$W(t)\}_{t=0}^\infty$, and the filtration
$\mathcal{F}^{W}_t=\sigma{\{W(s)|s\leq{t}\}}$.
We assume that the $\sigma$--algebras
contain the null sets.
A BS financial
market consists of a savings account $B(t)$
with an interest rate $r$, assuming without loss of generality
that $r=0$, i.e.
\begin{equation}\label{2.1}
B(t)\equiv B_0>0
\end{equation}
and of a risky asset $S^W$ given by the following equation
\begin{equation}\label{2.2}
S^W(t)=S_0\exp(\sigma W(t)+(\kappa-\sigma^2/2) t), \ \ S_0>0
\end{equation}
where $\sigma>0$ is called volatility and $\kappa\in\mathbb{R}$ is
another constant.
Denote by $\tilde{P}^W$
the unique martingale measure for the above
model. Using standard arguments it follows
that the restriction of
the probability measure $\tilde{P}^{W}$ to
the $\sigma$--algebra
$\mathcal{F}^{W}_t$ satisfies
\begin{equation}\label{2.2+}
Z(t):=\frac{d\tilde{P}^W}{dP^{W}}|\mathcal{F}^{W}_t
=\exp(-\frac{\kappa}{\sigma} W(t)-\frac{1}{2}(\frac{\kappa}{\sigma})^2 t).
\end{equation}

Let $T<\infty$ be the
maturity date of our American option and let $\mathcal{T}^W_{[0,T]}$ be the set of all stopping
times with respect to the filtration
$\mathcal{F}^W$ which take values in $[0,T]$. Denote by $M[0,T]$ the
space of all Borel measurable functions on $[0,T]$ with the
uniform topology (induced by the norm $||u||=\sup_{0\leq t\leq T}|u(t)|$).
Let $F:[0,T]\times M[0,T]\rightarrow\mathbb{R}_{+}$
be a continuous function (with respect to the product topology)
such that there exists a constant $C>0$ which satisfies
\begin{equation}\label{2.3}
\begin{split}
\sup_{0\leq t\leq T} F(t,x)\leq C\sup_{0\leq t \leq T} |x(t)|,
\ \ \forall{x}\in M[0,T].
\end{split}
\end{equation}
Furthermore, we assume that
for any $t\in [0,T]$ and $x,y\in M[0,T]$,
$F(t,x)=F(t,y)$ if $x(s)=y(s)$ for any $s\leq{t}$.

Next, consider a cash--settled American contingent claim with the
payoff process given by
\begin{equation}\label{2.4}
Y^W(t)=F(t,S^W), \ 0\leq t\leq{T}.
\end{equation}
From the assumptions above it follows that
${\{Y^W(t)\}}_{t=0}^T$ is a continuous
adapted stochastic process and
$E^W\sup_{0\leq t\leq T}Y^W(t),\tilde{E}^W\sup_{0\leq t\leq T}Y^W(t)<\infty$, where $E^W$ and
$\tilde{E}^W$, denote the expectations with respect to the probability
measures $P^W$ and $\tilde{P}^W$, respectively.

In our model purchase and sale, of
the risky asset are subject to a proportional
transaction costs of rate $\lambda$ and
$\mu$, respectively. We assume that $\lambda>0$ and $0<\mu<1$
are constants.
Thus a trading strategy with a (finite)
horizon $T$ and an initial capital $x$ is a pair
$\pi=(x,\gamma)$ where $\gamma=\{\gamma(t)\}_{t=0}^T$
is an adapted process of bounded variation with left continuous paths and $\gamma(0)=0$.
Set
\begin{equation}\label{2.5}
\gamma^{+}(t)=\frac{\gamma(t)+ \int_{0}^t |d\gamma(s)|}{2} \ \ \mbox{and} \ \
\gamma^{-}(t)=\frac {\int_{0}^t |d\gamma(s)|-\gamma(t)}{2}.
\end{equation}
Clearly $\gamma(t)=\gamma^{+}(t)-\gamma^{-}(t)$
is a decomposition of $\gamma$ into a positive variation $\gamma^{+}$
and a negative variation $\gamma^{-}$.
The random variables
$\gamma^{+}(t)$ and $\gamma^{-}(t)$,
denote the cumulative number of stocks,
purchased up to time $t$ and sold up to time $t$,
(not including the transfers made at time $t$) respectively.
The portfolio value at time $t\in [0,T]$ (after liqudation) of a trading strategy $\pi$ is given by
\begin{eqnarray}\label{2.6}
&V^{\pi}_{\lambda,\mu}(t)=x-(1+\lambda)
(\int_{0}^tS^W(u)d\gamma^{+}(u)+\gamma(t)^{-}S^W(t))\\
&+(1-\mu)
(\int_{0}^t S^W(u)d\gamma^{-}(u)+\gamma(t)^{+}S^W(t))\nonumber
\end{eqnarray}
where we denote $y^{+}=\max(y,0)$, $y^{-}=\max(-y,0)$.
Observe that $V^\pi_{\lambda,\mu}(t)$ is the portfolio
value before the transfers made at time $t$.
A self financing
strategy $\pi$ is called \textit{admissible}
if the following no--bankruptcy condition holds
\begin{equation}\label{2.7}
{V}^{\pi}_{\lambda,\mu}(t)\geq{0} \ \ \ \ \ \forall t\in{[0,{T}]}.
\end{equation}
The set of all \textit{admissible} self financing strategies with an initial
capital $x$ will denoted by $\mathcal{A}^W(x,\lambda,\mu)$.
For an \textit{admissible} self financing strategy
$\pi$ the shortfall risk is given by
\begin{equation}\label{2.8}
R(\pi,\lambda,u)=\sup_{\tau\in\mathcal{T}^W_{[0,T]}}E^W[(Y^W({\tau})-
{V}^{\pi}_{\lambda,\mu}(\tau))^+],
\end{equation}
which is the maximal possible expectation of the shortfall which measured in cash.
The shortfall risk for
an initial capital $x$ is given by
\begin{equation}\label{2.9}
R(x,\lambda,\mu)=\inf_{\pi\in\mathcal{A}^W(x,\lambda,\mu)} R(\pi,\lambda,\mu).
\end{equation}

A portfolio strategy
$\pi\in\mathcal{A}^W(x,\lambda,\mu)$ will be
called $\ve$-optimal if $R(\pi,\lambda,\mu)\leq R(x,\lambda,\mu)+\ve.$
For $\ve=0$ the above portfolio is called an optimal portfolio.

The following theorem (which is proved in Section 5) provides an existence result for the optimal portfolio.
\begin{thm}\label{thm2.0}
For any $\lambda>0$, $0<\mu<1$ and $x\in\mathbb{R}_{+}$, there exists a portfolio strategy
$\pi\in\mathcal{A}^W(x,\lambda,\mu)$ such that
\begin{equation}\label{2.9+}
R(\pi,\lambda,\mu)=R(x,\lambda,\mu).
\end{equation}
\end{thm}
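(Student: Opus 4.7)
The plan is a convexity-compactness argument: take a minimizing sequence in $\mathcal{A}^W(x,\lambda,\mu)$, extract suitable convex combinations that converge $P^W$-a.s., and verify that the limit is admissible and attains the infimum. To set things up, let $\{\pi_n=(x,\gamma_n)\}_{n\geq 1}\subset\mathcal{A}^W(x,\lambda,\mu)$ satisfy $R(\pi_n,\lambda,\mu)\downarrow R(x,\lambda,\mu)$. Viewing a strategy as the pair of non-decreasing processes $(\gamma^{+},\gamma^{-})$, the wealth $V^{\pi}_{\lambda,\mu}(t)$ given by (\ref{2.6}) is affine in $\pi$, so $\mathcal{A}^W(x,\lambda,\mu)$ is convex, and $R(\cdot,\lambda,\mu)$ is a convex functional, being a supremum of expectations of the convex map $v\mapsto(Y^W(\tau)-v)^{+}$ evaluated at the affine argument $V^{\pi}_{\lambda,\mu}(\tau)$.

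The next step is a uniform estimate on the variations $\gamma_n^{\pm}$. Applying (\ref{2.7}) at $t=T$ in (\ref{2.6}) together with the integration by parts $\gamma_n(T)S^W(T)-\int_{0}^{T}S^W\,d\gamma_n=\int_{0}^{T}\gamma_n\,dS^W$ yields
\begin{equation*}
\lambda\int_{0}^{T}S^W\,d\gamma_n^{+}+\mu\int_{0}^{T}S^W\,d\gamma_n^{-}+\lambda\gamma_n(T)^{-}S^W(T)+\mu\gamma_n(T)^{+}S^W(T)\leq x+\int_{0}^{T}\gamma_n\,dS^W.
\end{equation*}
Since $S^W$ is a local martingale under $\tilde P^W$, a localisation gives a uniform $\tilde P^W$-bound on $\int_{0}^{T}S^W\,d\gamma_n^{\pm}$, which, together with continuity and positivity of $S^W$, provides tightness for the random measures $d\gamma_n^{\pm}$. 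A Komlos-type argument for adapted processes of bounded variation then produces forward convex combinations $\tilde\gamma_n\in\mathrm{conv}(\gamma_n,\gamma_{n+1},\dots)$ converging $P^W$-a.s., with analogous convergence of $\tilde\gamma_n^{\pm}$ to some $\gamma^{*\pm}$; one selects a left-continuous adapted BV representative $\gamma^{*}$.

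By convexity of $\mathcal{A}^W(x,\lambda,\mu)$ one has $\tilde\pi_n=(x,\tilde\gamma_n)\in\mathcal{A}^W(x,\lambda,\mu)$ and $R(\tilde\pi_n,\lambda,\mu)\to R(x,\lambda,\mu)$. Passing to the limit in (\ref{2.6}) gives $V^{\pi^{*}}_{\lambda,\mu}\geq 0$ with $\pi^{*}=(x,\gamma^{*})\in\mathcal{A}^W(x,\lambda,\mu)$. Lower semicontinuity of the shortfall is then obtained pointwise in $\tau$ via Fatou:
\begin{equation*}
E^W[(Y^W(\tau)-V^{\pi^{*}}_{\lambda,\mu}(\tau))^{+}]\leq\liminf_n E^W[(Y^W(\tau)-V^{\tilde\pi_n}_{\lambda,\mu}(\tau))^{+}]\leq\liminf_n R(\tilde\pi_n,\lambda,\mu)=R(x,\lambda,\mu),
\end{equation*}
and taking the supremum over $\tau$ on the left yields $R(\pi^{*},\lambda,\mu)\leq R(x,\lambda,\mu)$, whence equality.

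The main obstacle I anticipate is the Komlos-type extraction: one needs a mode of convergence of $\tilde\gamma_n$ strong enough that the liquidation terms $\tilde\gamma_n^{\pm}(t)S^W(t)$ in (\ref{2.6}) pass to the limit simultaneously along the family of stopping times defining the supremum in $R$, while preserving left-continuity and adaptedness of the limit $\gamma^{*}$. The strict positivity of $\lambda$ and $\mu$ is essential here, as it is precisely what bounds the total variations of the $\gamma_n$; without it the compactness argument breaks, consistent with the fact that in the frictionless limit optimal strategies need not exist.
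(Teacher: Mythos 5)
Your plan reproduces the architecture of the paper's proof in Section 5 almost step for step: a minimizing sequence, the variation bound obtained from integration by parts and the no--bankruptcy condition (the paper's version is $\min(\lambda,\mu)\int_0^t S^W|d\gamma_n|\leq x+\int_0^t\gamma_n\,dS^W$; yours is the sharper form with $\lambda$ and $\mu$ weighting the two variations separately), the observation that $\int\gamma_n\,dS^W$ is a bounded-below local martingale under $\tilde P^W$ and hence a supermartingale, the Komlos-type extraction of forward convex combinations (the paper invokes Lemmas 3.1 and 3.4 of Guasoni (2002B) to pass from $L^0$-boundedness of $\int S^W|d\gamma_n|$ to that of $\int|d\gamma_n|$ and then to a.s.\ convergent convex combinations), convexity of $R(\cdot,\lambda,\mu)$ in the wealth process, and Fatou at the end. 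So the approach is the paper's.

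The one genuine gap is precisely the obstacle you flag in your last paragraph, and it needs an argument rather than an acknowledgement. The convex combinations $\tilde\gamma_n^{\pm}$ converge a.s.\ only in $dt\otimes \tilde P^W$ and, after discarding a null set, simultaneously on a countable dense set $I\subset[0,T]$; there is no control at an arbitrary stopping time $\tau$, so the domination $V^{\pi^*}_{\lambda,\mu}(\tau)\geq\limsup_n V^{\tilde\pi_n}_{\lambda,\mu}(\tau)$ that your pointwise Fatou step requires is simply not available for general $\tau\in\mathcal{T}^W_{[0,T]}$. The paper closes this with Lemma 4.1: since $V^{\pi}_{\lambda,\mu}$ is left continuous with only negative jumps, the supremum defining $R(\pi,\lambda,\mu)$ is unchanged if one restricts to stopping times taking finitely many values in $I$, and for those the convergence on $I$ suffices. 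A second point worth making explicit: even on $I$ one does not get convergence of the wealths but only the inequality $V^{\pi^*}_{\lambda,\mu}(t)\geq\lim_n V^{\tilde\pi_n}_{\lambda,\mu}(t)$, because the total variation of the limit $\gamma^{*}$ can be strictly smaller than the limit of the total variations (see (\ref{4.11+})); this is harmless since the inequality points the right way both for admissibility and for the Fatou estimate, but "passing to the limit in (\ref{2.6})" should be phrased as a lower semicontinuity statement, not an identity.
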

Next, we introduce the binomial models.
Similar binomial models were used to approximate
option prices and shortfall risks
in the complete setup (see Kifer 2006, Dolinsky and Kifer 2008, 2010)
i.e., in the absence of transaction costs.
For any $n$ consider the $n$--step binomial market which consists
of a savings account $B^{(n)}(t)$ given by
\begin{equation}\label{2.10}
B^{(n)}(t)\equiv B_0>0
\end{equation}
and of risky stock $S^{\xi,n}$ given by
by the formulas
$S^{\xi,n}(t)=S_0$ for $t\in{[0,T/n)}$ and
\begin{equation}\label{2.11}
S^{\xi,n}(t)=S_0\exp\big(\sigma(T/n)^{1/2}\sum_{k=1}^{[nt/T]}
\xi_k\big)\,\,\mbox{if}\,\, t\geq{T/n}
\end{equation}
where $\xi_1,\xi_2,...$ are i.i.d. random variables taking values 1
and $-1$ with probabilities
$p^{(n)}=\big(\exp((\sigma-\frac{2\kappa}{\sigma})\sqrt{\frac{T}{n}})+1\big)^{-1}$
and
$1-p^{(n)}=\big(\exp((\frac{2\kappa}{\sigma}-\sigma)\sqrt{\frac{T}{n}})+1\big)^{-1}$,
respectively. Let $P^\xi_n={\{p^{(n)},1-p^{(n)}\}}^\infty$
be the corresponding product probability measure on the space of
sequences $\Omega_\xi=\{-1,1\}^\infty$. For any $k\geq 0$ let
$\mathcal{F}^{\xi}_k=\sigma{\{\xi_1,...,\xi_k}\}$,
($\mathcal{F}^\xi_0=\{\emptyset,\Omega_\xi\}$).
Denote
by $\mathcal{T}^{\xi}_{0,n}$ the set of all stopping times with
respect to the  filtration $\mathcal{F}^{\xi}_k$ with values in
${\{0,1,...,n\}}$.

The $n$--step binomial market
is active at the times $0,\frac{T}{n},\frac{2T}{n},...,T$.
As before we assume that purchase (respectively, sale)
of the risky asset is subject to a proportional
transaction cost of rate $\lambda$ (respectively, $\mu$). Thus in the
$n$--step binomial model a trading strategy with an
initial capital $x$
is a pair $\pi=(x,\{\gamma(k)\}_{k=1}^{n})$ where for any $k$,
$\gamma(k)$ is a random variable $\mathcal{F}^{\xi}_{k-1}$ measurable
which represents
the number of stocks that the investor has
at the moment $\frac{kT}{n}$, before the transfers
made in this moment of time.
The portfolio value (in cash) of a
trading strategy $\pi$ is given by
\begin{eqnarray}\label{2.12}
&V^{\pi}_{\lambda,\mu}(k)=x-(1+\lambda)
(\gamma(k)^{-}S^{\xi,n}(kT/n)+\\
&\sum_{i=1}^{k} (\gamma(i)-\gamma(i-1))^{+}
S^{\xi,n}((i-1)T/n))+(1-\mu)(\gamma(k)^{+}S^{\xi,n}(kT/n)+\nonumber\\
&\sum_{i=1}^{k}(\gamma(i)-
\gamma(i-1))^{-} S^{\xi,n}((i-1)T/n)),  \ \ k=0,1,...,n. \nonumber
\end{eqnarray}
Note that $V^\pi_{\lambda,\mu}(k)$ is the portfolio value
at the time $\frac{kT}{n}$ before the made transfers in this time.
A self financing
strategy $\pi$ is called \textit{admissible}
if the following no--bankruptcy condition holds
\begin{equation}\label{2.13}
{V}^{\pi}_{\lambda,\mu}(k)\geq{0} \ \ \ \ \ \forall k\leq n.
\end{equation}
The set of all \textit{admissible} self financing strategies with an initial
capital $x$ will denoted by $\mathcal{A}^{\xi,n}(x,\lambda,\mu)$.

Consider an American contingent claim
with the adapted payoff process
\begin{equation}\label{2.14}
Y^{\xi,n}(k)=F\big(\frac{kT}{n},S^{\xi,n}\big), \  \ 0 \leq k\leq n.
\end{equation}
For $\pi\in\mathcal{A}^{\xi,n}(x,\lambda,\mu)$
the shortfall risk defined by
\begin{equation}\label{2.15}
R_n(\pi,\lambda,u)=\max_{\tau\in\mathcal{T}^{\xi}_{0,n}}E^\xi_n[(Y^{\xi,n}({\tau})-
{V}^{\pi}_{\lambda,\mu}(\tau))^+]
\end{equation}
where $E^\xi_n$ is the expectation with respect to the probability measure
$P^\xi_n$. The shortfall risk for an initial capital $x$ is given by
\begin{equation}\label{2.16}
R_n(x,\lambda,\mu)=\inf_{\pi\in\mathcal{A}^{\xi,n}(x,\lambda,\mu)} R_n(\pi,\lambda,\mu).
\end{equation}

The following theorem is the main result of the paper and it says that the shortfall risk of an American option in the
BS market with proportional transaction costs $\lambda,\mu$ can be approximated by a sequence of shortfall risks
of an American options with same proportional costs in the binomial models defined above.
This result has a practical value since for any $n$ the shortfall risk $R_n(x,\lambda,\mu)$
can be calculated by dynamical programming algorithm which is given in Section 3.
\begin{thm}\label{thm2.1}
For any $\lambda>0$, $0<\mu<1$ and $x\in\mathbb{R}_{+}$,
\begin{equation}\label{2.17}
\lim_{n\rightarrow\infty}R_n(x,\lambda,\mu)=R(x,\lambda,\mu).
\end{equation}
\end{thm}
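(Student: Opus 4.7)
The plan is to prove the two inequalities $\limsup_n R_n(x,\la,\mu)\le R(x,\la,\mu)$ and $\liminf_n R_n(x,\la,\mu)\ge R(x,\la,\mu)$ separately. Two preparatory tools would be developed first: (i) the \emph{stability} of both $R(x,\cdot,\cdot)$ and $R_n(x,\cdot,\cdot)$ with respect to the cost parameters, obtained by comparing a strategy used under costs $(\la,\mu)$ with the same strategy under nearby costs $(\la\pm\del,\mu\pm\del)$ and observing that the extra loss per unit traded is bounded by $\del\cdot S$; (ii) a Skorokhod-type coupling placing $S^{\xi,n}$ and $S^W$ on a common probability space with $\sup_{0\le t\le T}|S^{\xi,n}(t)-S^W(t)|\to 0$ in probability together with uniform moment control. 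Stability is what replaces, in this incomplete setting, the martingale-representation tools used in Dolinsky--Kifer (2008, 2010).

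For the upper bound I would fix $\ve>0$, use stability to choose $\la'<\la$, $\mu'<\mu$ and $\del>0$ with $R(x-\del,\la',\mu')\le R(x,\la,\mu)+\ve$, invoke Theorem \ref{thm2.0} to obtain an optimal $\pi^{*}=(x-\del,\gam^{*})\in\cA^{W}(x-\del,\la',\mu')$, and discretize by setting $\gam^{(n)}(k):=\gam^{*}(kT/n)$ to produce a binomial strategy with initial capital $x$. On the coupled probability space the discrepancy between $V^{\pi^{(n)}}_{\la,\mu}(k)$ and $V^{\pi^{*}}_{\la',\mu'}(kT/n)$ is controlled by the cost gaps $\la-\la'$ and $\mu-\mu'$ absorbing the fluctuations of $|S^{\xi,n}-S^{W}|$, so that the buffer $\del$ keeps $\pi^{(n)}$ admissible for all large $n$. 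Binomial stopping times embed into $\cT^{W}_{[0,T]}$ directly, and, thanks to the continuity of $F$ on $[0,T]\times M[0,T]$, the shortfall along the discretized strategy is within $o(1)$ of $R(\pi^{*},\la',\mu')$, yielding $R_n(x,\la,\mu)\le R(x,\la,\mu)+2\ve$ for all sufficiently large $n$.

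For the lower bound I would take $\ve_n\downarrow 0$ and $\ve_n$-optimal binomial strategies $\pi^{(n)}\in\cA^{\xi,n}(x,\la,\mu)$, extend each $\gam^{(n)}$ to a left-continuous adapted BV process on $[0,T]$ by piecewise-constant interpolation, and exploit the standard a priori bound that the total traded value $\int_0^{T} S^{\xi,n}\,d|\gam^{(n)}|$ is bounded in $L^{1}$ by a constant depending only on $x,\la,\mu$ (each unit of cash traded funds a cost of at least $\min(\la,\mu)$ and the total cost is, up to bounded corrections from price movements, financed by the initial capital). Using Helly's selection together with the coupling, pass to a subsequential weak limit, identify it with an adapted BV strategy $\pi^{\infty}\in\cA^{W}(x,\la'',\mu'')$ under slightly enlarged costs $(\la'',\mu'')\downarrow(\la,\mu)$ that absorb the discretization error, and pass to the limit in (\ref{2.8}) via continuity of $F$ and rounding of any BS stopping time to the grid $\{kT/n\}$. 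A final application of the stability estimate sends $(\la'',\mu'')\to(\la,\mu)$ and delivers $\liminf_n R_n(x,\la,\mu)\ge R(x,\la,\mu)$.

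The principal obstacle is the lower bound: one must produce a continuous-time admissible BS strategy as a limit of binomial strategies while preserving the pointwise no-bankruptcy constraint (\ref{2.7}) and reconciling the mismatch between the binomial and BS price paths. The stability in $(\la,\mu)$ is precisely what makes this possible, as it allows discretization errors, coupling errors, and stopping-time rounding errors to be absorbed into an arbitrarily small perturbation of the transaction-cost parameters; without it the naive weak-limit argument would only yield an inequality under strictly enlarged costs.
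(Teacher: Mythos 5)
Your upper bound is essentially the paper's argument: discretize a near-optimal BS strategy onto the Skorokhod grid, let a small perturbation of the parameters absorb the price fluctuation between trading dates, and remove the perturbation by a stability lemma (the paper's Lemma \ref{lem3.2} puts the perturbation on the transaction costs rather than on the initial capital, and its Lemma \ref{lem3.3} supplies exactly the stability statement you invoke). The genuine problem is your lower bound, which is both harder than necessary and has gaps as stated.

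First, the direction of your cost perturbation is wrong. If the limiting strategy $\pi^{\infty}$ is only admissible under \emph{enlarged} costs $(\lambda'',\mu'')$, its wealth $V^{\pi^{\infty}}_{\lambda'',\mu''}$ is \emph{smaller} than the (limit of the) binomial wealths computed under $(\lambda,\mu)$, so the inequality you need, namely $\liminf_n R_n(\pi^{(n)},\lambda,\mu)\geq R(\pi^{\infty},\lambda'',\mu'')$, points the wrong way; and if instead you shrink the costs to make $\pi^{\infty}$ wealthy enough, you end up with $\liminf_n R_n\geq R(x,\lambda'',\mu'')$ for $\lambda''<\lambda$, $\mu''<\mu$, which requires continuity of $R$ in the costs \emph{from below} --- a statement neither your stability tool nor the paper's Lemma \ref{lem3.3} (which is one-sided, from above) provides. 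Second, an $\omega$-by-$\omega$ Helly selection does not yield an adapted limit process: the subsequence depends on $\omega$. One needs convex combinations in the style of Guasoni's Lemmas 3.1 and 3.4 together with convexity of the shortfall functional to keep the risk controlled along the convex combinations; this is precisely the machinery the paper reserves for the existence result (Section 5), not for the limit theorem. Third, adaptedness of the interpolated binomial strategies to $\{\mathcal{F}^W_t\}$ fails on the deterministic grid, since $\gamma^{(n)}(k)$ is $\mathcal{F}^W_{\theta^{(n)}_{k-1}}$-measurable and $\theta^{(n)}_{k-1}$ may exceed $(k-1)T/n$. The paper sidesteps all three issues at once: the lifted strategy $\psi_n(\pi_n)$ trades only at the embedding times $\theta^{(n)}_k$, where $S^W(\theta^{(n)}_k)=S^{\xi,n}(kT/n)$ exactly, so it is admissible under the \emph{same} costs with \emph{identical} wealth at those times (by a supermartingale argument under $\tilde{P}^W$ for the intermediate times), and the lower bound reduces to comparing stopping times and payoffs (Lemma \ref{lem3.1}) --- no compactness argument is needed. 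You should replace your compactness-based lower bound with this lifting argument.
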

Next, we introduce a simple form of Skorohod embedding which allows
to consider the above binomial markets
and the BS model on the same probability space.
Set $W^{*}(t)=\frac{\ln S^W(t)}{\sigma}$, $t\geq 0$ \
and for any $n\in\mathbb{N}$ define
recursively
$\theta^{(n)}_0=0$, $\theta^{(n)}_{k+1}=\inf{\{t>\theta^{(n)}_k
:|W^{*}(t)-W^{*}(\theta^{(n)}_k)|=\sqrt\frac{T}{n}\}}$.
Observe (see Dolinsky and Kifer 2008) that for any $k$,
$W^{*}(\theta^{(n)}_{k+1})-W^{*}(\theta^{(n)}_k)$ is independent
of $\mathcal{F}^{W}_{\theta^{(n)}_k}$ and excepts the values $\sqrt\frac{T}{n}$ and
$-\sqrt\frac{T}{n}$, with probabilities $p^{(n)}$
and $1-p^{(n)}$, respectively. For any $n$ define the map
$\Pi_n:L^{\infty}(\mathcal{F}^{\xi}_n,P^{\xi}_n)\rightarrow{L^{\infty}
(\mathcal{F}^{W}_{\theta^{(n)}_n} ,P^W)}$
by $\Pi_n(U)=\tilde U$ so
that if $U=f\bigg(\sqrt{\frac{T}{n}}\xi_1,...,\sqrt{\frac{T}{n}}\xi_n\bigg)$
for a function $f$ on $\{\sqrt{\frac{T}{n}},-\sqrt{\frac{T}{n}}\}^n$
then $$\tilde
U=f(W^{*}({\theta^{(n)}_1}),W^{*}({\theta^{(n)}_2})-W^{*}({\theta^{(n)}_1}),...,
W^{*}({\theta^{(n)}_n})-W^{*}({\theta^{(n)}_{n-1}})).$$
Let $\mathcal{A}^{W,n}(x,\lambda,\mu)$ be set of
\textit{admissible} self financing strategies which
managed on the set $\{0,\theta^{(n)}_1,...,\theta^{(n)}_n\}$ such that after the moment $\theta^{(n)}_n$
the number of stocks in the portfolio is $0$.
Namely,
$\pi=(x,\{\gamma(t)\}_{t=0}^\infty)\in\mathcal{A}^{W,n}(x,\lambda,\mu)$
if there are random variables $u_1,...,u_n$ such that
\begin{equation}\label{2.17+}
\gamma(t)=\sum_{i=0}^{n-1} \mathbb{I}_{\theta^{(n)}_i< t\leq\theta^{(n)}_{i+1}}u_{i+1}
\end{equation}
where for any $i\geq 1$, $u_i$ is $\mathcal{F}^W_{\theta^{(n)}_{i-1}}$
measurable. We require that the corresponding wealth process which is given by
(\ref{2.6}) will satisfy the no--bankruptcy condition (\ref{2.7}).
The map $\Pi_n$ allows us to define a function
$\psi_n:\mathcal{A}^{\xi,n}(x,\lambda,\mu)\rightarrow\mathcal{A}^{W,n}(x,\lambda,\mu)$ which
maps \textit{admissible} self financing strategies
in the
$n$--step binomial model to the set of \textit{admissible}
self financing strategies in
the BS model. Let $\pi=(x,\{\gamma(k)\}_{k=1}^{n})
\in \mathcal{A}^{\xi,n}(x,\lambda,\mu)$. Define $\psi_n(\pi)=(x,\{\tilde\gamma(t)\}_{t=0}^\infty)$
where
\begin{equation}\label{2.18}
\tilde\gamma(t)=\sum_{i=0}^{n-1} \mathbb{I}_{\theta^{(n)}_i< t\leq\theta^{(n)}_{i+1}}\Pi_n(\gamma(i+1))
\end{equation}
where we set $\mathbb{I}_{A}=1$ if an event $A$ occurs and
$\mathbb{I}_{A}=0$ if not. Let us show that $\tilde\pi:=\psi_n(\pi)$ is an \textit{admissible} portfolio.
From (\ref{2.6}), (\ref{2.12}) and the equality $\Pi_n(S^{\xi,n}(kT/n))=S^W(\theta^{(n)}_k)$, $k\leq n$
it follows that
\begin{equation}\label{2.18+}
V^{\tilde\pi}_{\lambda,\mu}(\theta^{(n)}_k)=\Pi_n(V^{\pi}_{\lambda,\mu}(k))\geq 0,  \ \ k=0,1,...,n.
\end{equation}
The portfolio strategy $\tilde\pi$ is managed only on the
set $\{0,\theta^{(n)}_1,...,\theta^{(n)}_n\}$, and so
it is clear that the wealth process
${\{V^{\tilde\pi}_{\lambda,\mu}(t)\}}_{t=0}^\infty$
is a supermartingale with respect to the measure $\tilde{P}^W$
Furthermore for any $t$,
$V^{\tilde\pi}_{\lambda,\mu}(t)=V^{\tilde\pi}_{\lambda,\mu}(t\wedge\theta^{(n)}_n)$.
This together with (\ref{2.18+}) gives
\begin{equation}\label{2.18+++}
V^{\tilde\pi}_{\lambda,\mu}(t)\geq \tilde{E}^W(V^{\tilde\pi}_{\lambda,\mu}(\theta^{(n)}_n)|
\mathcal{F}^W_{\theta^{(n)}_n\wedge t})\geq 0.
\end{equation}
Thus $\psi_n(\pi)$
satisfies the no--bankruptcy condition, and $\psi_n(\pi)\in\mathcal{A}^{W,n}(x,\lambda,\mu)$.
If we restrict the portfolio $\psi_n(\pi)$ to the interval $[0,T]$ we obtain
an element which belongs to $\mathcal{A}^W(x,\lambda,\mu)$.

In Section 3 we prove that the optimal portfolios
for the shortfall risk measure
in the above binomial models can
be calculated by using a dynamical programming algorithm.
The following result shows how to use
these portfolios together with the maps
$\psi_n$, $n\in\mathbb{N}$ in order to construct "almost" optimal
portfolios in the BS model.
\begin{thm}\label{thm2.2}
Let $\lambda>0$, $0<\mu<1$ and $x>0$. For any $n\in\mathbb{N}$ let
$\pi_n=\pi_n(x,\lambda,\mu)\in\mathcal{A}^{\xi,n}(x,\lambda,\mu)$
be the optimal portfolio given by (\ref{2+.24}). Then
\begin{equation}\label{2.19}
\lim_{n\rightarrow\infty}R(\psi_n(\pi_n),\lambda,\mu)=R(x,\lambda,\mu).
\end{equation}
\end{thm}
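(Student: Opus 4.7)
The plan is to combine a trivial lower bound with an upper bound deduced from Theorem~\ref{thm2.1}. Since $\psi_n(\pi_n)$, restricted to $[0,T]$, belongs to $\mathcal{A}^W(x,\lambda,\mu)$ (as verified in the discussion following (\ref{2.18+++})), the infimum definition of $R(x,\lambda,\mu)$ in (\ref{2.9}) gives
\[
R(\psi_n(\pi_n),\lambda,\mu) \geq R(x,\lambda,\mu) \qquad \text{for every } n.
\]
Moreover, the optimality of $\pi_n$ in the $n$-step binomial model yields $R_n(\pi_n,\lambda,\mu)=R_n(x,\lambda,\mu)$, and Theorem~\ref{thm2.1} gives $R_n(x,\lambda,\mu)\to R(x,\lambda,\mu)$. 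The claim therefore reduces to the matching inequality
\[
\limsup_{n\to\infty}\bigl[R(\psi_n(\pi_n),\lambda,\mu) - R_n(\pi_n,\lambda,\mu)\bigr] \leq 0.
\]

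To prove this inequality I would fix an arbitrary $\tau\in\mathcal{T}^W_{[0,T]}$ and approximate it along the Skorohod skeleton: set
\[
\sigma_n \;=\; \min\bigl\{k\leq n : \theta^{(n)}_k\geq\tau\bigr\},
\]
which is a stopping time with respect to the discrete filtration $(\mathcal{F}^W_{\theta^{(n)}_k})_{k=0}^n$ and hence, via $\Pi_n$, corresponds to an element of $\mathcal{T}^\xi_{0,n}$. Using the identity $V^{\tilde\pi_n}_{\lambda,\mu}(\theta^{(n)}_{\sigma_n}) = \Pi_n(V^{\pi_n}_{\lambda,\mu}(\sigma_n))$ from (\ref{2.18+}) together with a corresponding replacement of $Y^W$ along the Skorohod skeleton by $Y^{\xi,n}$, the strategy is to obtain an estimate of the form
\[
E^W\bigl[(Y^W(\tau)-V^{\tilde\pi_n}_{\lambda,\mu}(\tau))^+\bigr]
\;\leq\; E^\xi_n\bigl[(Y^{\xi,n}(\sigma_n)-V^{\pi_n}_{\lambda,\mu}(\sigma_n))^+\bigr] + \Delta_n
\;\leq\; R_n(\pi_n,\lambda,\mu) + \Delta_n,
\]
with a single error term $\Delta_n$ that is independent of $\tau$. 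Taking the supremum over $\tau$ on the left-hand side and letting $n\to\infty$ then delivers the desired inequality, provided $\Delta_n\to 0$.

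The principal technical obstacle is proving $\Delta_n\to 0$ uniformly in $\tau$. The error naturally splits into a payoff discrepancy $|Y^W(\tau)-F(\sigma_n T/n, S^{W,n})|$, where $S^{W,n}(t):=S^W(\theta^{(n)}_{[nt/T]})$ is the piecewise-constant Skorohod interpolation of $S^W$, and a wealth discrepancy $|V^{\tilde\pi_n}_{\lambda,\mu}(\tau)-V^{\tilde\pi_n}_{\lambda,\mu}(\theta^{(n)}_{\sigma_n})|$. The former is routine: the standard Skorohod estimate $\max_{k\leq n}|\theta^{(n)}_k-kT/n|\to 0$ in probability (together with $\theta^{(n)}_n\to T$), the continuity and growth condition (\ref{2.3}) on $F$, and uniform integrability of $\sup_{t\leq T}S^W(t)$ combine to yield the required $L^1(P^W)$-smallness. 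The wealth discrepancy is more delicate because the stock position $\Pi_n(\gamma_n(k))$ on $(\theta^{(n)}_k,\theta^{(n)}_{k+1}]$ is not uniformly bounded in $n$. Here one exploits the fact that the number of stocks is constant inside the interval, together with the pathwise bound $|W^*(t)-W^*(\theta^{(n)}_k)|\leq\sqrt{T/n}$, so the induced stock-price fluctuation is of multiplicative order $\sigma\sqrt{T/n}$; coupling this with the $\tilde{P}^W$-supermartingale bound $\tilde{E}^W V^{\tilde\pi_n}_{\lambda,\mu}(t)\leq x$ from (\ref{2.18+++}) and the density (\ref{2.2+}) relating $\tilde{P}^W$ to $P^W$, one extracts a $\tau$-independent $L^1(P^W)$ estimate of order $O(\sqrt{T/n})$ for the wealth fluctuation between consecutive Skorohod times. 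Once $\Delta_n\to 0$ uniformly is secured, taking the supremum over $\tau$ and combining with the reduction step completes the proof.
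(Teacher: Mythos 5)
Your overall architecture is the same as the paper's: the lower bound $R(\psi_n(\pi_n),\lambda,\mu)\geq R(x,\lambda,\mu)$ is immediate from admissibility, and everything reduces to showing $\limsup_{n}\bigl[R(\psi_n(\pi_n),\lambda,\mu)-R_n(x,\lambda,\mu)\bigr]\leq 0$, which is exactly the paper's Lemma~\ref{lem3.1}; invoking Theorem~\ref{thm2.1} for $R_n(x,\lambda,\mu)\to R(x,\lambda,\mu)$ is legitimate, since its proof rests on Lemmas~\ref{lem3.1}--\ref{lem3.3} and not on the present statement (in fact the paper derives both limit theorems from the single chain (\ref{3.13--})). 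The payoff-discrepancy part of your error term is also handled essentially as in the paper, via (\ref{3.1+++}), continuity of $F$, and uniform integrability coming from (\ref{2.3}). (A minor remark: the paper does not need uniformity in $\tau$; it fixes a near-maximizing $\tau_n$ for $R(\psi_n(\pi_n),\lambda,\mu)$ and only needs the error to vanish along that sequence.)

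The genuine gap is in the wealth-discrepancy step. You propose the pathwise bound $|V^{\tilde\pi_n}_{\lambda,\mu}(\tau)-V^{\tilde\pi_n}_{\lambda,\mu}(\theta^{(n)}_{\sigma_n})|\lesssim \sqrt{T/n}\cdot|\gamma|S^W$ and claim an $O(\sqrt{T/n})$ estimate in $L^1(P^W)$ from $\tilde E^W V\le x$ and the density (\ref{2.2+}). But the position value $|\gamma(t)|S^W(t)$ carried across a Skorohod interval is not controlled in $L^1$ by the initial capital: the no-bankruptcy condition together with integration by parts only yields $\tilde E^W\int_0^T S^W(u)|d\gamma|(u)\le x/\min(\lambda,\mu)$ (cf.\ (\ref{4.5})), and passing from the cumulative traded volume to the instantaneous position value costs an unbounded factor of order $\sup_{t}S^W(t)/\inf_{t}S^W(t)$, while passing from $L^1(\tilde P^W)$ to $L^1(P^W)$ costs another unbounded factor $1/Z$; with only first moments available on the volume, neither product can be closed by H\"older. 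The paper avoids estimating the position altogether: since $\tilde\pi_n$ is managed only at the Skorohod times, $V^{\tilde\pi_n}_{\lambda,\mu}$ is a $\tilde P^W$-supermartingale, hence $V^{\tilde\pi_n}_{\lambda,\mu}(\tau_n)\geq\tilde E^W\bigl(V^{\tilde\pi_n}_{\lambda,\mu}(\theta^{(n)}_{\nu_n})\,\big|\,\mathcal{F}^W_{\tau_n\wedge\theta^{(n)}_n}\bigr)$, and Jensen's inequality pushes this one-sided comparison inside $(\,\cdot\,)^+$ (see (\ref{3.4})--(\ref{3.5})); after changing measures, every residual error involves only the payoff $Y^W$ and the ratio $Z(\theta^{(n)}_{\nu_n})/Z(\tau_n\wedge\theta^{(n)}_n)$, both uniformly integrable by (\ref{2.3}). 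That supermartingale--Jensen device is the missing idea; without it, or some substitute giving genuine integrable control of the held position, your claim that $\Delta_n\to 0$ does not go through.
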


\section{Analysis of the binomial models}\label{sec2+}\setcounter{equation}{0}
In this section we provide a dynamical programming algorithm for the shortfall risks and the corresponding optimal portfolios
in the binomial models. This dynamical programming algorithm will be essential
for comparing the shortfall risks in the binomial models
with the shortfall risk in the BS model. Through this section we will assume that the transaction costs $\lambda,\mu$
are fixed.

Let $\pi=(x,\{\gamma(k)\}_{k=1}^n)\in\mathcal{A}^{\xi,n}(x,\lambda,\mu)$ for some $x\geq 0$ and $n\in\mathbb{N}$.
From (\ref{2.12}) it follows that
\begin{eqnarray}\label{2+.1}
&V^\pi_{\lambda,\mu}(k+1)=G\big(V^\pi_{\lambda,\mu}(k),\gamma(k)S^{\xi,n}(kT/n),
(\gamma(k+1)-\gamma(k))\times\\
&S^{\xi,n}(kT/n),
\exp(\sigma\sqrt\frac{T}{n}\xi_{k+1})\big), \ \ k=0,1,...,n-1\nonumber
\end{eqnarray}
where
\begin{eqnarray}\label{2+.2}
&G(u,v,w,\rho)=u-(1-\mu)v^{+}+(1+\lambda)v^{-}+(1-\mu)w^{-}-\\
&(1+\lambda)w^{+}+\rho((1-\mu)(w+v)^{+}-(1+\lambda)(w+v)^{-}).\nonumber
\end{eqnarray}
For any $(u,v)\in\mathbb{R}_{+}\times\mathbb{R}$, $0<a<1$ and $b>0$
introduce the set $\mathcal{A}_{a,b}(u,v)=\{w|G(u,v,w,1+b), G(u,v,w,1-a)\geq 0\}$.
From simple calculations we obtain
\begin{eqnarray}\label{2+.3}
&\mathcal{A}_{a,b}(u,v)=\big[-v-\frac{u}{(1+\lambda)(1+b)-(1-\mu)},
\frac{(u-a v(1-\mu))^{+}}{1+\lambda-(1-\mu)(1-a)}-\frac{(u-a v(1-\mu))^{-}}{a(1-\mu)}\big]\\
&\mbox{if} \ v\geq 0 \ \mbox{and} \  \mathcal{A}_{a,b}(u,v)=\big[-\frac{(u+b(1+\lambda) v)^{+}}{(1+b)(1+\lambda)-(1-\mu)}+
\frac{(u+b(1+\lambda) v)^{-}}{b(1+\lambda)},\nonumber\\
&-v+\frac{u}{1+\lambda-(1-\mu)(1-a)}\big]
 \ \mbox{if} \ v<0.
\nonumber
\end{eqnarray}
Set $a_n=1-\exp(-\sigma\sqrt\frac{T}{n})$ and $b_n=\exp(\sigma\sqrt\frac{T}{n})-1$.
From (\ref{2+.1}) and the independency of $\xi_{k+1}$ and $\mathcal{F}^\xi_k$
it follows that $\pi=(x,\{\gamma(k)\}_{k=1}^n)\in\mathcal{A}^{\xi,n}(x,\lambda,\mu)$ iff
for any $k$, $\gamma(k)$ is $\mathcal{F}^\xi_{k-1}$ measurable and
\begin{equation}\label{2+.4}
(\gamma(k+1)-\gamma(k))S^{\xi,n}(kT/n)\in\mathcal{A}_{a_n,b_n}(V^{\pi}_{\lambda,\mu}(k),\gamma(k)S^{\xi,n}(kT/n)).
\end{equation}
Next, we prove a technical lemma.
\begin{lem}\label{lem2+.1}
Let $0<a,p<1$, $b>0$
and $H_1,H_2:\mathbb{R}_{+}\times\mathbb{R}\rightarrow\mathbb{R}_{+}$
be a functions which satisfy the following conditions. For i=1,2:\\
i. $H_i$ is a continuous function.\\
ii. For any $v\in\mathbb{R}$, $H_i(\cdot,v)$ is a non increasing function.\\
iii. $H_i$ is a piecewise linear function which vanishing at infinity
with respect to the first variable.
Namely, there exists a natural numbers $N^{(i)},M^{(i)}\in\mathbb{N}$ and a convex \textit{polyhedrals} $K^{(i)}_1,...,K^{(i)}_{N^{(i)}}\subset\mathbb{R}_{+}\times\mathbb{R}$ with pairwise disjoint interiors
and $\bigcup_{j=1}^{N^{(i)}}K^{(i)}_j=[0,M^{(i)}]\times\mathbb{R}$, such that for any $j\leq N^{(i)}$
\begin{equation}\label{2+.5}
H_i(u,v)=c^{(i)}_j u+d^{(i)}_j v+e^{(i)}_j \ \ \ \ \forall (u,v)\in K^{(i)}_j
\end{equation}
where $c^{(i)}_1,...,c^{(i)}_{N^{(i)}},d^{(i)}_1,...,d^{(i)}_{N^{(i)}},e^{(i)}_1,...,e^{(i)}_{N^{(i)}}\in\mathbb{R}$
are constants.\\

Define the function $H:\mathbb{R}_{+}\times\mathbb{R}\rightarrow\mathbb{R}_{+}$ by
\begin{eqnarray}\label{2+.10}
&H(u,v)=\inf_{w\in\mathcal{A}_{a,b}{(u,v)}}pH_1\big(G(u,v,w,1+b),(v+w)(1+b)\big)+\\
&(1-p)H_1\big(G(u,v,w,1-a),(v+w)(1-a)\big).\nonumber
\end{eqnarray}
Then $H$ is satisfying the conditions i.--iii. above.
\end{lem}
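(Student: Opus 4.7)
\noindent The plan is to set
\[
\Phi(u,v,w):=pH_1\bigl(G(u,v,w,1+b),(v+w)(1+b)\bigr)+(1-p)H_2\bigl(G(u,v,w,1-a),(v+w)(1-a)\bigr),
\]
so that $H(u,v)=\inf_{w\in\mathcal{A}_{a,b}(u,v)}\Phi(u,v,w)$, and then to reduce the claim to the standard fact that the value function of a parametric linear programme with piecewise-linear data is again piecewise linear. The three conditions on $H$ will be read off from the resulting cell decomposition, together with a monotonicity argument and the substitution $w=-v$.

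First I would establish the piecewise-linear structure of $\Phi$. By \eqref{2+.2}, each map $\Psi_\rho(u,v,w):=(G(u,v,w,\rho),(v+w)\rho)$ is continuous piecewise linear: its only nonlinear ingredients are the positive and negative parts of $v$, $w$ and $v+w$. Hence there is a finite polyhedral subdivision of $\mathbb{R}_+\times\mathbb{R}^2$ on which both $\Psi_{1+b}$ and $\Psi_{1-a}$ are simultaneously affine. Refining it by the pull-backs of the polyhedral decompositions of $H_1$ and $H_2$ (adjoining to each the cell $[M^{(i)},\infty)\times\mathbb{R}$ on which $H_i\equiv0$) yields a finite polyhedral refinement $\{P_\alpha\}$ of $\mathbb{R}_+\times\mathbb{R}^2$ on which $\Phi$ is affine. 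On each such $P_\alpha$ the function $G$ is affine, so the defining inequalities $G(u,v,w,1+b),G(u,v,w,1-a)\geq 0$ of $\mathcal{D}:=\{(u,v,w):u\geq0,\ w\in\mathcal{A}_{a,b}(u,v)\}$ cut out a polyhedral piece $P_\alpha\cap\mathcal{D}$; keeping only the nonempty such pieces gives a finite polyhedral decomposition of $\mathcal{D}$ on which $\Phi$ is affine.

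On every such $P_\alpha$, writing $\Phi=c_\alpha u+d_\alpha v+e_\alpha w+f_\alpha$, the parametric partial minimisation
\[
g_\alpha(u,v):=\inf\{c_\alpha u+d_\alpha v+e_\alpha w+f_\alpha:(u,v,w)\in P_\alpha\}
\]
is the infimum of an affine function of $w$ over a compact interval, so the infimum is attained. By Fourier--Motzkin elimination the projection $\mathrm{proj}_{u,v}(P_\alpha)$ is a polyhedron and $g_\alpha$ is piecewise linear on it. Consequently $H(u,v)=\min_\alpha g_\alpha(u,v)$ is the pointwise minimum of finitely many piecewise-linear functions on polyhedral domains; a last refinement along the arrangement $\{g_\alpha=g_\beta\}$ yields a polyhedral decomposition of $\mathbb{R}_+\times\mathbb{R}$ on which $H$ is affine, which is exactly the combinatorial content of condition (iii).

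Properties (i) and (ii) are then immediate. Continuity of $H$ follows from Berge's maximum theorem, since $(u,v)\mapsto\mathcal{A}_{a,b}(u,v)$ is a compact-valued continuous correspondence with nonempty values: the choice $w=-v$ always lies in $\mathcal{A}_{a,b}(u,v)$, and the endpoints in \eqref{2+.3} depend continuously on $(u,v)$. For (ii), $G(\cdot,v,w,\rho)$ is affine of slope $1$ in $u$ and \eqref{2+.3} shows $\mathcal{A}_{a,b}(\cdot,v)$ is nondecreasing in $u$, so the nonincreasingness of $H_i$ in its first variable gives $\Phi(u',v,w)\leq\Phi(u,v,w)$ whenever $u'\geq u$ and $w$ is feasible at $(u,v)$, and passing to the infimum yields $H(u',v)\leq H(u,v)$. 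Finally, a case check on $v\geq 0$ and $v<0$ applied to \eqref{2+.2} gives $G(u,v,-v,\rho)=u$ and $v+(-v)=0$, so $\Phi(u,v,-v)=pH_1(u,0)+(1-p)H_2(u,0)$, which vanishes as soon as $u\geq M:=\max(M^{(1)},M^{(2)})$; adjoining the cell $[M,\infty)\times\mathbb{R}$ to the decomposition above finishes (iii) in the stated form. The main obstacle is the polyhedral bookkeeping --- verifying that the successive refinements and the Fourier--Motzkin projection really produce finitely many cells on which $\Phi$ and then $H$ are affine, and coherently combining them --- but no new tool beyond parametric LP and Berge's theorem is required.
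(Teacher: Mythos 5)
Your proposal is correct, and it diverges from the paper's argument in the one place where there is real content to diverge on, namely condition~(iii). The paper works pointwise: it introduces, for each fixed $(u,v)$, the finite ``breakpoint'' set $B(u,v)$ consisting of the endpoints of $\mathcal{A}_{a,b}(u,v)$ together with the $w$'s at which the composed arguments of $H_1,H_2$ hit the boundaries $\partial K^{(i)}_j$; since $I(u,v,\cdot)$ is affine between consecutive breakpoints, $H(u,v)=\min_{w\in B(u,v)}I(u,v,w)$, and because every element of $B(u,v)$ is of the form $\alpha_j u+\beta_j v+\delta_j$, the value $H(u,v)$ agrees pointwise with one of finitely many affine functions of $(u,v)$; piecewise linearity is then deduced from this selection property together with continuity (a step the paper leaves terse). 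Your route instead builds a global polyhedral subdivision of $(u,v,w)$-space on which $\Phi$ is affine and invokes parametric LP / Fourier--Motzkin, which produces the polyhedral cell decomposition of condition~(iii) explicitly rather than inferring it a posteriori; the price is heavier combinatorial bookkeeping, the gain is that the final deduction ``finitely many affine candidates $+$ continuity $\Rightarrow$ polyhedral decomposition'' is replaced by an explicit construction. For continuity you cite Berge's maximum theorem where the paper runs the sequential compactness argument by hand (extract a convergent subsequence of minimizers $w_n$, and approximate a minimizer $\tilde w$ by feasible $\tilde w_n$ using the explicit interval endpoints in \eqref{2+.3}); these are the same argument in substance. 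Your monotonicity and vanishing-at-infinity steps (nestedness $\mathcal{A}_{a,b}(u,v)\subseteq\mathcal{A}_{a,b}(u',v)$ for $u\le u'$, and the substitution $w=-v$ giving $\Phi(u,v,-v)=pH_1(u,0)+(1-p)H_2(u,0)$) coincide with \eqref{2+.12} and \eqref{2+.18}. You also correctly read the second occurrence of $H_1$ in \eqref{2+.10} as the typo it is and replace it by $H_2$.
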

\begin{proof}
Set, $I(u,v,w)=pH_1\big(G(u,v,w,1+b),(v+w)(1+b)\big)+(1-p)H_1\big(G(u,v,w,1-a),(v+w)(1-a)\big).$
Observe that $I(\cdot,u,v)$ is a non increasing function for any $v,w$. Clearly, for any $0 \leq u_1<u_2$ and $v\in\mathbb{R}$,
$\mathcal{A}_{a,b}(u_1,v)\subseteq\mathcal{A}_{a,b}(u_2,v)$.
Thus,
\begin{eqnarray}\label{2+.12}
&H(u_1,v)=\inf_{w\in\mathcal{A}_{a,b}{(u_1,v)}}I(u_1,v,w)
\geq
\inf_{w\in\mathcal{A}_{a,b}{(u_2,v)}}I(u_1,v,w)\geq \\
&\inf_{w\in\mathcal{A}_{a,b}{(u_2,v)}}I(u_2,v,w)=H(u_2,v)\nonumber
\end{eqnarray}
and so, $H$ satisfies condition ii. Next, we prove continuity.
Let $(u,v)\in\mathbb{R}_{+}\times\mathbb{R}$ and
$\{(u_n,v_n)\}_{n=1}^\infty\subset\mathbb{R}_{+}\times\mathbb{R}$
such that $(u_n,v_n)\rightarrow (u,v)$ and $\lim_{n\rightarrow\infty} H(u_n,v_n)$ exists (may be $\pm\infty$).
For any $n$ there exists ($I$ is a continuous function)
$w_n\in\mathcal{A}_{a,b}(u_n,v_n)$ which satisfies
$I(u_n,v_n,w_n)=H(u_n,v_n)$. The sequence $\{w_n\}_{n=1}^\infty$ is bounded and so its has a subsequence
$\{w_{n_k}\}_{k=1}^\infty$ which converge to $w$.
From (\ref{2+.3}) it follows that $w\in\mathcal{A}_{a,b}(u,v)$ and so
\begin{equation}\label{2+.13}
H(u,v)\leq I(u,v,w)=\lim_{n\rightarrow\infty} I(u_n,v_n,w_n)=\lim_{n\rightarrow\infty} H(u_n,v_n).
\end{equation}
Choose $\tilde w \in \mathcal{A}_{a,b}(u,v)$ for which $I(u,v,\tilde w)=H(u,v)$. From (\ref{2+.3}) it follows that
there exists a sequence $\tilde w_n \in\mathcal{A}_{a,b}(u_n,v_n)$, $n\in\mathbb{N}$
such that $\lim_{n\rightarrow\infty}\tilde w_n =\tilde w.$ Thus,
\begin{equation}\label{2+.14}
H(u,v)=I(u,v,\tilde w)=\lim_{n\rightarrow\infty} I(u_n,v_n,\tilde w_n)\geq\lim_{n\rightarrow\infty} H(u_n,v_n).
\end{equation}
From (\ref{2+.13})--(\ref{2+.14}) we obtain that $H$ is continuous. Finally,
we prove that $H$ satisfies condition iii.
For any $(u,v)\in\mathbb{R}_{+}\times\mathbb{R}$ introduce the set
\begin{eqnarray*}
&B(u,v)=\big\{w\in \mathcal{A}_{a,b}(u,v)|\big(G(u,v,w,1+b),(v+w)(1+b)\big)\in \bigcup_{j=1}^{N^{(1)}}\partial K^{(1)}_j\big\}\bigcup\\
&\big\{w\in \mathcal{A}_{a,b}(u,v)|\big(G(u,v,w,1-b),(v+w)(1-b)\big)\in \bigcup_{j=1}^{N^{(2)}}\partial K^{(2)}_j\big\}\bigcup \partial \mathcal{A}_{a,b}(u,v)
.\nonumber
\end{eqnarray*}
Fix $u,v$ and let $B(u,v)=\{w_1<w_2<...<w_k\}$. From (\ref{2+.5}) it follows that for any $i<k$, the function
$I(u,v,\cdot)$ is linear on the interval $[w_i,w_{i+1}]$ and so
\begin{equation}\label{2+.16}
H(u,v)=\min_{w\in B(u,v)}I(u,v,w).
\end{equation}
Note that there exists a finite sequence of real numbers $\alpha_1,...,\alpha_N,\beta_1,...,\beta_N,\delta_1,...,\delta_N$
such that for any $(u,v)$,
$B(u,v)\subseteq\{\alpha_j u+\beta_j v+\delta_j|j\leq N\}.$
This together with (\ref{2+.16}) gives that there
there exists a finite sequence of real numbers $\Phi_1,...,\Phi_m,\Delta_1,...,\Delta_m,$
$\Theta_1,...,\Theta_m$
such that for any $(u,v)\in\mathbb{R}_{+}\times\mathbb{R}$
\begin{equation}\label{2+.17}
H(u,v)=\Phi_j u+\Delta_j v+\Theta_j
\end{equation}
for some $j$ (which depends on $(u,v))$. From (\ref{2+.3}),
$-v\in\mathcal{A}_{a,b}(u,v)$ and so
\begin{equation}\label{2+.18}
H(u,v)\leq I(u,v,-v)=p H_1(u,0)+(1-p)H_2(u,0)\leq\max(H_1(u,0),H_2(u,0)).
\end{equation}
From (\ref{2+.17})--(\ref{2+.18}) and the fact that $H$ is continuous we conclude that
$H$ satisfies condition iii. and the proof is completed.
\end{proof}

Next, fix $n$ and consider the $n$--step binomial model.
For any $\pi\in\mathcal{A}^{\xi,n}(x,\lambda,\mu)$ define a sequence of
random variables ${\{U^\pi(k)\}}_{k=0}^n$ by
\begin{eqnarray}\label{2+.19}
&U^\pi(n)=({Y}^{\xi,n}(n)-{V}^\pi_{\lambda,\mu}(n))^+, \ \mbox{and} \ \mbox{for} \ k<n\\
&U^\pi(k)=\max\big(E^{\xi}_n(U^{\pi}({k+1})
|\mathcal{F}^{\xi}_k),({Y}^{\xi,n}(k)-V^\pi_{\lambda,\mu}(k))^{+}\big).\nonumber
\end{eqnarray}
Applying standard results for optimal stopping (see Peskir and Shiryaev 2006)
for the process $({Y}^{\xi,n}(k)-{V}^{\pi}_{\lambda,\mu}(k))^{+}$, $k=0,1,...,n$
we obtain
\begin{equation}\label{2+.20}
U^\pi(0)=
\max_{\tau\in{\mathcal{T}^\xi_{0,n}}}E^{\xi}[(Y^{\xi,n}({\tau})-{V}^{\pi}_{\lambda,\mu}({\tau}))^+]=R_n(\pi,\lambda,\mu).
\end{equation}
For any $0\leq k\leq n$ let $\phi^{(n)}_k:\{-1,1\}^k\rightarrow\mathbb{R}_{+}$ such that
\begin{equation}\label{2+.21}
\phi^{(n)}_k(\xi_1,...,\xi_k)=Y^{\xi,n}(k).
\end{equation}
Define a sequence of functions
$J^{(n)}_k:\mathbb{R}_{+}\times\mathbb{R}\times \{-1,1\}^k\rightarrow \mathbb{R}_{+},\, k=0,1,...,
n$ by the following backward relations. For any $z_1,...,z_n\in\{-1,1\}$ and $(u,v)\in\mathbb{R}_{+}\times\mathbb{R}$
\begin{eqnarray}\label{2+.22}
&J^{(n)}_n(u,v,z_1,...,z_n)=(\phi^{(n)}_n(z_1,...,z_n)-u)^+ \  \ \mbox{and}\\
&J^{(n)}_k(u,v,z_1,...,z_k)=\max\bigg((\phi^{(n)}_n(z_1,...,z_k)-u)^{+}, \ \
\inf_{w\in \mathcal{A}_{a_n,b_n}(u,v)}\nonumber\\
&p^{(n)} J^{(n)}_{k+1}\big(G(u,v,w,1+b_n),(1+b_n)(u+w),z_1,...,z_k,1\big)+(1-p^{(n)})\times\nonumber\\
&J^{(n)}_{k+1}\big(G(u,v,w,1-a_n),(1-a_n)(u+w),z_1,...,z_k,-1\big)\bigg)
\ \mbox{for}  \ k<n\nonumber
\end{eqnarray}
where recall, $p^{(n)}$ was defined after (\ref{2.11}).
From Lemma \ref{lem2+.1} it follows (by backward induction) that for any $k\leq n$ and
$z_1,...,z_k\in\{-1,1\}$ the function
$H(\cdot,\cdot):=
J^{(n)}_k(\cdot,\cdot,z_1,...,z_k)$ is satisfying conditions i.--iii
which were introduced in Lemma \ref{lem2+.1}. In particular it is continuous. This fact allows us to define the functions
$h^{(n)}_k:\mathbb{R}_{+}\times\mathbb{R}\times\{-1,1\}^k\rightarrow \mathbb{R}$, $k<n$
by
\begin{eqnarray}\label{2+.23}
&h^{(n)}_k(u,v,z_1,...,z_k)=argmin_{w\in\mathcal{A}_{a_n,b_n}(u,v)}\\
&p^{(n)} J^{(n)}_{k+1}\big(G(u,v,w,1+b_n),(1+b_n)(u+w),z_1,...,z_k,1\big)+(1-p^{(n)})\times\nonumber\\
&J^{(n)}_{k+1}\big(G(u,v,w,1-a_n),(1-a_n)(u+w),z_1,...,z_k,-1\big)\bigg).\nonumber
\end{eqnarray}
Let $x>0$ be an initial capital. Define $\pi=\pi_n(x,\lambda,\mu)=(x,\{\gamma(k)\}_{k=1}^n)$
by
\begin{eqnarray}\label{2+.24}
&V^{\pi}_{\lambda,\mu}(0)=x, \ \mbox{and} \ \mbox{for} \ 0\leq k<n, \ \gamma(k+1)=\gamma(k)+\\
&\frac{1}{S^{\xi,n}(kT/n)}h^{(n)}_k\big(V^{\pi}_{\lambda,\mu}(k),\gamma(k)S^{\xi,n}(kT/n),\xi_1,...,\xi_k \big) \  \mbox{and} \
V^{\pi}_{\lambda,\mu}(k+1)=\nonumber\\
&G\big(V^{\pi}_{\lambda,\mu}(k),\gamma(k)S^{\xi,n}(kT/n), (\gamma(k+1)-\gamma(k))S^{\xi,n}(kT/n),\exp(\sigma\sqrt\frac{T}{n}\xi_{k+1})\big).\nonumber
\end{eqnarray}
\begin{prop}\label{prop2+.2}
For any $n\in\mathbb{N}$ and $x\geq 0$
\begin{equation}\label{2+.25}
R_n(\pi_n(x,\lambda,\mu),\lambda,\mu)=R_n(x,\lambda,\mu)=J^{(n)}_0(x,0).
\end{equation}
\end{prop}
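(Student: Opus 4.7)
The plan is a standard dynamic programming argument by backward induction on $k$. For any admissible $\pi=(x,\{\gamma(k)\})\in\mathcal{A}^{\xi,n}(x,\lambda,\mu)$, I introduce the random variable
$$L^{\pi}(k):=J^{(n)}_{k}\bigl(V^{\pi}_{\lambda,\mu}(k),\gamma(k)S^{\xi,n}(kT/n),\xi_{1},\dots,\xi_{k}\bigr),$$
and I will prove that $U^{\pi}(k)\geq L^{\pi}(k)$ almost surely for every $k\leq n$, with equality when $\pi=\pi_{n}(x,\lambda,\mu)$. In view of (2+.20) and the convention $J^{(n)}_{0}=J^{(n)}_{0}(x,0)$, this yields both equalities in (2+.25).

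Base case $k=n$: the function $J^{(n)}_{n}(u,v,z_{1},\dots,z_{n})=(\phi^{(n)}_{n}(z_{1},\dots,z_{n})-u)^{+}$ does not depend on $v$, so by (2+.19) and (2.14), $L^{\pi}(n)=(Y^{\xi,n}(n)-V^{\pi}_{\lambda,\mu}(n))^{+}=U^{\pi}(n)$. Inductive step: assume $U^{\pi}(k+1)\geq L^{\pi}(k+1)$. Set $u=V^{\pi}_{\lambda,\mu}(k)$, $v=\gamma(k)S^{\xi,n}(kT/n)$ and $w=(\gamma(k+1)-\gamma(k))S^{\xi,n}(kT/n)$. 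By (2+.4), $w\in\mathcal{A}_{a_{n},b_{n}}(u,v)$. Using (2+.1), the identity $\gamma(k+1)S^{\xi,n}((k+1)T/n)=(v+w)\exp(\sigma\sqrt{T/n}\xi_{k+1})$, and the fact that $\xi_{k+1}$ is independent of $\mathcal{F}^{\xi}_{k}$ under $P^{\xi}_{n}$,
\begin{align*}
E^{\xi}_{n}\bigl(L^{\pi}(k+1)\,\big|\,\mathcal{F}^{\xi}_{k}\bigr)
&=p^{(n)}J^{(n)}_{k+1}\bigl(G(u,v,w,1+b_{n}),(v+w)(1+b_{n}),\xi_{1},\dots,\xi_{k},1\bigr)\\
&\quad+(1-p^{(n)})J^{(n)}_{k+1}\bigl(G(u,v,w,1-a_{n}),(v+w)(1-a_{n}),\xi_{1},\dots,\xi_{k},-1\bigr)\\
&\geq\inf_{w'\in\mathcal{A}_{a_{n},b_{n}}(u,v)}\{\cdots\}.
\end{align*}
Combined with $U^{\pi}(k)=\max\bigl(E^{\xi}_{n}(U^{\pi}(k+1)|\mathcal{F}^{\xi}_{k}),(Y^{\xi,n}(k)-u)^{+}\bigr)$, the identity $Y^{\xi,n}(k)=\phi^{(n)}_{k}(\xi_{1},\dots,\xi_{k})$, and (2+.22), this yields $U^{\pi}(k)\geq L^{\pi}(k)$.

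For the reverse inequality, I take $\pi=\pi_{n}(x,\lambda,\mu)$ as in (2+.24). The argmin defining $h^{(n)}_{k}$ is attained because $\mathcal{A}_{a_{n},b_{n}}(u,v)$ is a compact interval by (2+.3) and the objective is continuous by Lemma 3.1 applied to $H_{1}=J^{(n)}_{k+1}(\cdot,\cdot,\xi_{1},\dots,\xi_{k},1)$ and $H_{2}=J^{(n)}_{k+1}(\cdot,\cdot,\xi_{1},\dots,\xi_{k},-1)$. Admissibility of $\pi_{n}$ follows immediately from the inclusion $h^{(n)}_{k}(\cdot)\in\mathcal{A}_{a_{n},b_{n}}(\cdot)$ together with the defining property of $\mathcal{A}_{a_{n},b_{n}}$, propagated recursively via (2+.24). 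Plugging this optimizing $w$ into the induction replaces the inequality in the display above by equality, giving $U^{\pi_{n}}(k)=L^{\pi_{n}}(k)$ for all $k$, and evaluating at $k=0$ produces the desired identities.

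The only point requiring a small amount of care is the attainment of the argmin and the measurable selection of $h^{(n)}_{k}$, for which Lemma 3.1 provides everything needed: continuity plus the piecewise linear structure ensure both existence of a minimizer and its measurability as a function of $(u,v,z_{1},\dots,z_{k})$; the rest of the proof is a routine verification that the dynamic programming recursion (2+.22) matches the one-step-ahead conditional expectation of $U^{\pi}(k+1)$.
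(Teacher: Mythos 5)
Your proof is correct and follows essentially the same route as the paper: a backward induction showing $U^{\pi}(k)\geq J^{(n)}_k(V^{\pi}_{\lambda,\mu}(k),\gamma(k)S^{\xi,n}(kT/n),\xi_1,\dots,\xi_k)$ for arbitrary admissible $\pi$, with equality for the portfolio $\pi_n(x,\lambda,\mu)$ built from the minimizers $h^{(n)}_k$, then evaluation at $k=0$ via (2+.20). Your added remarks on attainment of the argmin and admissibility of $\pi_n$ are consistent with how the paper uses Lemma 3.1 and (2+.4).
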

\begin{proof}
Fix $n\in\mathbb{N}$ and $x\geq0$. Set $\pi=\pi_n(x,\lambda,\mu)=(x,{\gamma})$ and let
$\tilde\pi=(x,\tilde\gamma)\in\mathcal{A}^{\xi,n}(x,\lambda,\mu)$ an arbitrary
portfolio.
First we prove by backward induction that
for any $k\leq n$,
\begin{eqnarray}\label{2+.26}
&J^{(n)}_k(V^{\pi}_{\lambda,\mu}(k),\gamma(k)S^{\xi,n}(kT/n),\xi_1,...,\xi_k)
=U^{\pi}(k)
\ \mbox{and} \\
&J^{(n)}_k(V^{\tilde\pi}_{\lambda,\mu}(k),\tilde\gamma(k)S^{\xi,n}(kT/n),\xi_1,...,\xi_k)
\leq U^{\tilde\pi}(k).\nonumber
\end{eqnarray}
For $k=n$,
we obtain from (\ref{2+.19}) and (\ref{2+.21})--(\ref{2+.22})
that the relations (\ref{2+.26}) hold with equality. Suppose that
(\ref{2+.26}) holds true for $k+1$ and prove them for $k$.
Set,
\begin{eqnarray*}
&\Upsilon=\gamma(k)S^{\xi,n}(kT/n), \  \tilde\Upsilon=\tilde\gamma(k)S^{\xi,n}(kT/n),\\
&\Gamma=h^{(n)}_k(V^{\pi}_{\lambda,\mu}(k),\Upsilon,\xi_1,...,\xi_k)
\ \mbox{and} \
\tilde\Gamma=(\tilde\gamma(k+1)-\tilde\gamma(k))S^{\xi,n}(kT/n).
\end{eqnarray*}
From (\ref{2+.23})--(\ref{2+.24}) and the induction assumption it follows
\begin{eqnarray}\label{2+.27}
&E^{\xi}_n(U^\pi(k+1)|\mathcal{F}^\xi_k)=E^{\xi}_n\bigg(J^{(n)}_{k+1}\bigg(G\big(V^{\pi}_{\lambda,\mu}(k),\Upsilon,
\Gamma,\exp(\sigma\sqrt{T/n}\xi_{k+1})\big),\\
&(\Gamma+\Upsilon)\exp(\sigma\sqrt{T/n}\xi_{k+1}),\xi_1,...,\xi_{k+1}\bigg)
\bigg|\mathcal{F}^\xi_k\bigg)=p^{(n)} J^{(n)}_{k+1}\bigg(G\big(V^{\pi}_{\lambda,\mu}(k),\Upsilon,\nonumber\\
&\Gamma,1+b_n\big),(\Gamma+\Upsilon)(1+b_n),\xi_1,...,\xi_k,1\bigg)
+(1-p^{(n)})J^{(n)}_{k+1}\bigg(G\big(V^{\pi}_{\lambda,\mu}(k),\Upsilon,\nonumber\\
&\Gamma,1-a_n\big),(\Gamma+\Upsilon)(1-a_n),\xi_1,...,\xi_k,-1\bigg)=
\min_{w\in\mathcal{A}_{a_n,b_n}(V^{\pi}_{\lambda,\mu}(k),\Upsilon)}\nonumber\\
&=p^{(n)} J^{(n)}_{k+1}\bigg(G\big(V^{\pi}_{\lambda,\mu}(k),\Upsilon,w,1+b_n\big),(w+\Upsilon)(1+b_n),\xi_1,...,\xi_k,1\bigg)+\nonumber\\
&(1-p^{(n)}) J^{(n)}_{k+1}\bigg(G\big(V^{\pi}_{\lambda,\mu}(k),\Upsilon,w,1-a_n\big),(w+\Upsilon)(1-a_n),\xi_1,...,\xi_k,-1\bigg).\nonumber
\end{eqnarray}
From (\ref{2+.4}) it follows that $\tilde\Gamma\in\mathcal{A}_{a_n,b_n}(V^{\tilde\pi}_{\lambda,\mu}(k),\tilde\Upsilon)$,
and so from the induction assumption
\begin{eqnarray}\label{2+.28}
&E^{\xi}_n(U^{\tilde\pi}(k+1)|\mathcal{F}^\xi_k)\geq E^{\xi}_n\bigg(J^{(n)}_{k+1}\bigg(G\big(V^{\tilde\pi}_{\lambda,\mu}(k),\tilde\Upsilon,
\tilde\Gamma,\exp(\sigma\sqrt{T/n}\xi_{k+1})\big),\\
&(\tilde\Gamma+\tilde\Upsilon)\exp(\sigma\sqrt{T/n}\xi_{k+1}),\xi_1,...,\xi_{k+1}\bigg)
\bigg|\mathcal{F}^\xi_k\bigg)=p^{(n)} J^{(n)}_{k+1}\bigg(G\big(V^{\tilde\pi}_{\lambda,\mu}(k),\tilde\Upsilon,\nonumber\\
&\tilde\Gamma,1+b_n\big),(\tilde\Gamma+\tilde\Upsilon)(1+b_n),\xi_1,...,\xi_k,1\bigg)
+(1-p^{(n)})J^{(n)}_{k+1}\bigg(G\big(V^{\tilde\pi}_{\lambda,\mu}(k),\tilde\Upsilon,\nonumber\\
&\tilde\Gamma,1-a_n\big),(\tilde\Gamma+\tilde\Upsilon)(1-a_n),\xi_1,...,\xi_k,-1\bigg)\geq
\min_{w\in\mathcal{A}_{a_n,b_n}(V^{\tilde\pi}_{\lambda,\mu}(k),\tilde\Upsilon)}\nonumber\\
&=p^{(n)} J^{(n)}_{k+1}\bigg(G\big(V^{\tilde\pi}_{\lambda,\mu}(k),\tilde\Upsilon,w,1+b_n\big),(w+\tilde\Upsilon)(1+b_n),\xi_1,...,\xi_k,1\bigg)+\nonumber\\
&(1-p^{(n)}) J^{(n)}_{k+1}\bigg(G\big(V^{\tilde\pi}_{\lambda,\mu}(k),\tilde\Upsilon,w,1-a_n\big),(w+\tilde\Upsilon)(1-a_n),\xi_1,...,\xi_k,-1\bigg).\nonumber
\end{eqnarray}
Combining (\ref{2+.19}), (\ref{2+.21})--(\ref{2+.22}) and (\ref{2+.27})-(\ref{2+.28}) we obtain that
(\ref{2+.26}) holds true. Next, by using
(\ref{2+.26}) for $k=0$ and (\ref{2+.20}) it follows that for any $\tilde\pi\in\mathcal{A}^{\xi,n}(x,\lambda,\mu)$
\begin{equation*}
R_n(\pi,\lambda,\mu)=U^\pi(0)=J^{(n)}_0(x,0)\leq U^{\tilde\pi}(0)=R_n(\tilde\pi,\lambda,\mu).
\end{equation*}
Thus $R_n(x,\lambda,\mu)=R_n(\pi,\lambda,\mu)=J^{(n)}_0(x,0)$, as required.
\end{proof}
\begin{cor}
From Lemma \ref{lem2+.1} and Proposition \ref{prop2+.2} we obtain that the function $R_n(x,\lambda,\mu)=J^{(n)}_0(x,0)$
is a continuous non increasing piecewise linear function vanishing at $\infty$. Namely, there exists a natural
 number $N$, $c_1,...,c_N\leq 0$, $d_1,...,d_N\in\mathbb{R}$ and
$0=\alpha_1<\alpha_2<...<\alpha_{N+1}<\infty$
 such that
$R_n(x,\lambda,\mu)=\sum_{i=1}^N \mathbb{I}_{[a_i,a_{i+1})}(c_ix+d_i)$.
\end{cor}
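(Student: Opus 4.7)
By Proposition \ref{prop2+.2} we have $R_n(x,\lambda,\mu) = J^{(n)}_0(x,0)$, so the task reduces to unveiling the structure of the one-variable function $g(x) := J^{(n)}_0(x,0)$ on $\mathbb{R}_{+}$, and reading off the required representation by restricting the two-variable structure of $J^{(n)}_0$ to the line $\{v=0\}$.

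The core ingredient has already been established immediately after (\ref{2+.22}): by backward induction on $k$, starting from $J^{(n)}_n(u,v,z_1,\dots,z_n) = (\phi^{(n)}_n(z_1,\dots,z_n)-u)^+$ (which trivially satisfies conditions i--iii of Lemma \ref{lem2+.1}, say with $M = \max \phi^{(n)}_n$), each $(u,v)\mapsto J^{(n)}_k(u,v,z_1,\dots,z_k)$ satisfies i--iii. In the inductive step Lemma \ref{lem2+.1} is applied with $H_1(u,v) = J^{(n)}_{k+1}(u,v,z_1,\dots,z_k,1)$, $H_2(u,v)=J^{(n)}_{k+1}(u,v,z_1,\dots,z_k,-1)$, $a=a_n$, $b=b_n$, $p=p^{(n)}$, which handles the infimum part of (\ref{2+.22}). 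The only detail not directly contained in Lemma \ref{lem2+.1} is that one must then take a pointwise maximum with $(\phi^{(n)}_k(z_1,\dots,z_k)-u)^+$. The latter is itself continuous, non-increasing in $u$ and piecewise affine vanishing at infinity (two trivial pieces), and conditions i--iii are preserved under the pointwise maximum of two functions satisfying them: refine the polyhedral decompositions by intersecting each pair $K^{(1)}_i \cap K^{(2)}_j$ further with the two closed polyhedra $\{(u,v): H_1(u,v)\ge H_2(u,v)\}$ and its complementary counterpart (themselves finite unions of polyhedra, since the two functions are piecewise affine), obtaining a common polyhedral refinement on each piece of which the max is affine. Continuity and monotonicity are evidently preserved, and vanishing at infinity follows from the bound (\ref{2+.18}) together with the vanishing of $(\phi^{(n)}_k-u)^+$ for $u$ large.

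Applying this at $k=0$ gives that $J^{(n)}_0$ is continuous on $\mathbb{R}_{+}\times\mathbb{R}$, non-increasing in $u$, affine on each cell of a finite polyhedral decomposition of $[0,M]\times\mathbb{R}$ for some $M>0$, and identically zero on $[M,\infty)\times\mathbb{R}$. Intersecting this decomposition with the line $\{v=0\}$ yields finitely many closed intervals covering $[0,M]$ on which $g$ is affine. Listing their endpoints in increasing order produces $0 = \alpha_1 < \alpha_2 < \cdots < \alpha_{N+1}$ with $\alpha_{N+1}$ equal to (or past) the first zero of $g$, and affine pieces $c_i x + d_i$ on each $[\alpha_i,\alpha_{i+1})$. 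Non-increasingness (condition ii for $J^{(n)}_0$ restricted to $v=0$) forces every slope $c_i \le 0$, and since $g$ vanishes beyond $\alpha_{N+1}$ one obtains exactly the claimed representation $R_n(x,\lambda,\mu) = \sum_{i=1}^N \mathbb{I}_{[\alpha_i,\alpha_{i+1})}(c_i x + d_i)$.

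The main thing to be careful about is the max step in the induction, since Lemma \ref{lem2+.1} addresses only the infimum step of the recursion (\ref{2+.22}). Once the polyhedral refinement argument above is in place, everything else (the reduction via Proposition \ref{prop2+.2}, the restriction $v=0$, and the derivation of the sign of the slopes) is routine.
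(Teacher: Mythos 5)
Your proposal is correct and follows essentially the same route as the paper, which simply invokes the backward-induction claim made after (\ref{2+.22}) (that each $J^{(n)}_k(\cdot,\cdot,z_1,\dots,z_k)$ satisfies conditions i.--iii.\ of Lemma \ref{lem2+.1}) and then restricts to the line $v=0$. Your additional polyhedral-refinement argument for the pointwise maximum with $(\phi^{(n)}_k(z_1,\dots,z_k)-u)^{+}$ correctly fills in a step that Lemma \ref{lem2+.1}, which only treats the infimum part of the recursion, leaves implicit in the paper.
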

\section{Proof of the limit theorems}\label{sec3}
In this section we complete the proof of Theorems \ref{thm2.1}--\ref{thm2.2}. We start with a technical preparations.

For any $n\in\mathbb{N}$ set
\begin{equation}\label{3.1}
S^{W,n}(t)=S^W(\theta^{(n)}_k), \  \ kT/n\leq t<(k+1)T/n, \ \ k=0,1,...,n.
\end{equation}
Define
\begin{equation}\label{3.1+}
Y^{W,n}(t)=F(t,S^{W,n}), \ \ t\in [0,T].
\end{equation}
Note that for any $0\leq k\leq n$
\begin{eqnarray}\label{3.1++}
&Y^{W,n}(kT/n)=\phi^{(n)}_k\bigg(\sqrt\frac{n}{T}W^{*}(\theta^{(n)}_1),\sqrt\frac{n}{T}(W^{*}(\theta^{(n)}_2)-W^{*}({\theta^{(n)}_1})),
...,\\
&\sqrt\frac{n}{T}(W^{*}(\theta^{(n)}_k)-W^{*}({\theta^{(n)}_{k-1}}))\bigg).\nonumber
\end{eqnarray}
From Kifer (2006)
\begin{eqnarray}\label{3.1+++}
&\lim_{n\rightarrow\infty}E^W\sup_{0\leq t\leq T}|S^{W,n}(t)-S^W(t)|=0
\ \ \mbox{and}\\
&\lim_{n\rightarrow\infty}E^W\max_{1\leq{k}\leq{n}}|\theta^{(n)}_k-\frac{kT}{n}|
=0.\nonumber
\end{eqnarray}
Fix $n$. Following Kifer (2006) we
introduce for each $k=1,2,...$ the finite $\sigma$-algebra
$\mathcal{G}^{W,n}_k=\sigma\{W^{*}(\theta^{(n)}_1),...,W^{*}(\theta^{(n)}_k)\}$ with
$\mathcal{G}^{W,n}_0={\{\emptyset,\Omega_W\}}$ being the trivial
$\sigma$-algebra. Let $\mathcal{S}^{W,n}_{0,n}$ and $\mathcal{T}^{W,n}_{0,n}$, be the sets of all
stopping times with values in the set $\{0,1,...,n\}$ with
respect to the filtrations $\{\mathcal{G}^{W,n}_k\}_{k=0}^n$ and
$\{\mathcal{F}^W_{\theta^{(n)}_k}\}_{k=0}^n$, respectively.
Recall the set $\mathcal{A}^{W,n}(x,\lambda,\mu)$ which was introduced before equation
(\ref{2.17+}). Define
\begin{equation}\label{3.2}
R^{W,n}(x,\lambda,\mu)=\inf_{\pi\in\mathcal{A}^W(x,\lambda,\mu)}\sup_{\tau\in\mathcal{T}^{W,n}_{0,n}}
E^W[(Y^{W,n}(\tau T/n)-V^\pi_{\lambda,\mu}(\theta^{(n)}_{\tau}))^{+}].
\end{equation}
From (\ref{2.6}) it follows that for any $\pi=(x,\{\gamma(t)\}_{t=0}^\infty)\in\mathcal{A}^{W,n}(x,\lambda,\mu)$,
\begin{equation}\label{3.2+}
V^\pi_{\lambda,\mu}(\theta^{(n)}_{k+1})=G\big(V^\pi_{\lambda,\mu}(k),\gamma(\theta^{(n)}_k) S^W(\theta^{(n)}_k),\Upsilon,
\exp(\sigma(W^{*}(\theta^{(n)}_{k+1})-W^{*}(\theta^{(n)}_k)))\big)
\end{equation}
where $\Upsilon=(\gamma(\theta^{(n)}_{k+1})-\gamma(\theta^{(n)}_k))S^W(\theta^{(n)}_k)$ and $G$ was introduced
in (\ref{2+.2}).

Combining similar arguments to those of Section 3 (replace
$\{\xi_i\}_{i=1}^n$, $\{S^{\xi,n}(\frac{iT}{n})\}_{i=0}^n$, and
$\{\mathcal{F}^\xi_i\}_{i=0}^n$
by
$\{\sqrt\frac{n}{T}(W^{*}(\theta^{(n)}_i)-W^{*}({\theta^{(n)}_{i-1}}))\}_{i=1}^n$,
$\{S^W(\theta^{(n)}_i)\}_{i=0}^n$ and $\{\mathcal{F}^W_{\theta^{(n)}_i}\}_{i=0}^n$, respectively) with
(\ref{3.1++}), (\ref{3.2+}) and the independency of $W^{*}(\theta^{(n)}_{k+1})-W^{*}(\theta^{(n)}_k)$ and
$\mathcal{F}^{W}_{\theta^{(n)}_k}$, we obtain
\begin{equation}\label{3.2++}
R^{W,n}(x,\lambda,\mu)=J^{(n)}_0(x)=R_n(x,\lambda,\mu) \ \ \forall{x,\lambda,\mu}.
\end{equation}

Next, fix an initial capital $x$ and a proportional transaction costs
$\lambda,\mu$. For any $n$ let $\pi_n=\pi_n(x,\lambda,\mu)$ be the optimal portfolio which is given by
(\ref{2+.24}). Consider the portfolio $\tilde\pi_n:=\psi_n(\pi_n)\in\mathcal{A}^W(x,\lambda,\mu)$.
For these portfolios we have the following lemma.
\begin{lem}\label{lem3.1}
\begin{equation}\label{3.3}
\lim \sup_ {n\rightarrow\infty}R(\tilde\pi_n,\lambda,\mu)-R_n(x,\lambda,\mu)\leq 0.
\end{equation}
\end{lem}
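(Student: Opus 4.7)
The strategy is to translate $R_n(x,\lambda,\mu)$ onto the BS probability space using (\ref{3.2++}), then pair each continuous stopping time $\tau$ with a discrete proxy and compare term by term. Since $\tilde\pi_n=\psi_n(\pi_n)\in\mathcal{A}^{W,n}(x,\lambda,\mu)$ is the image of the optimal binomial portfolio, the identities $V^{\tilde\pi_n}_{\lambda,\mu}(\theta^{(n)}_k)=\Pi_n(V^{\pi_n}_{\lambda,\mu}(k))$ and $Y^{W,n}(kT/n)=\Pi_n(Y^{\xi,n}(k))$ combined with Proposition \ref{prop2+.2} and (\ref{3.2++}) give
\begin{equation*}
R_n(x,\lambda,\mu)=\sup_{\sigma\in\mathcal{T}^{W,n}_{0,n}}E^W\big[(Y^{W,n}(\sigma T/n)-V^{\tilde\pi_n}_{\lambda,\mu}(\theta^{(n)}_\sigma))^+\big].
\end{equation*}
It therefore suffices to dominate $R(\tilde\pi_n,\lambda,\mu)$---a supremum over the larger class $\mathcal{T}^W_{[0,T]}$ against the continuous payoff $Y^W$---by the right-hand side plus an error vanishing with $n$.

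For every $\tau\in\mathcal{T}^W_{[0,T]}$ set $\tau_n:=(\min\{k:\theta^{(n)}_k\geq\tau\})\wedge n$; standard stopping-time arguments place $\tau_n\in\mathcal{T}^{W,n}_{0,n}$, and on $\{\tau\leq\theta^{(n)}_n\}$ one has $\theta^{(n)}_{\tau_n-1}<\tau\leq\theta^{(n)}_{\tau_n}$, an interval during which $\tilde\pi_n$ performs no trade. The inequality $|a^+-b^+|\leq|a-b|$ yields
\begin{equation*}
E^W\big[(Y^W(\tau)-V^{\tilde\pi_n}_{\lambda,\mu}(\tau))^+\big]\leq R_n(x,\lambda,\mu)+A_n(\tau)+B_n(\tau)+\varepsilon_n,
\end{equation*}
where $\varepsilon_n:=E^W[\sup_tY^W(t)\,\mathbb{I}_{\{\theta^{(n)}_n<T\}}]\to 0$ by (\ref{2.3}) and (\ref{3.1+++}), while $A_n(\tau):=E^W|Y^W(\tau)-Y^{W,n}(\tau_nT/n)|$ and $B_n(\tau):=E^W|V^{\tilde\pi_n}_{\lambda,\mu}(\tau)-V^{\tilde\pi_n}_{\lambda,\mu}(\theta^{(n)}_{\tau_n})|$. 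The bound $\sup_\tau A_n(\tau)\to 0$ is routine: the continuity of $F$ on compacts, the $L^1$-convergences in (\ref{3.1+++}) of both $\sup_t|S^{W,n}(t)-S^W(t)|$ and $\max_k|\theta^{(n)}_k-kT/n|$ (controlling the two arguments of $F$), and the growth bound (\ref{2.3}) supplying uniform integrability together yield the claim.

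The hard part is $\sup_\tau B_n(\tau)\to 0$. Since no trade occurs on $(\theta^{(n)}_{\tau_n-1},\theta^{(n)}_{\tau_n}]$, formula (\ref{2.6}) reduces the wealth difference to $[(1-\mu)u_{\tau_n}^+-(1+\lambda)u_{\tau_n}^-](S^W(\tau)-S^W(\theta^{(n)}_{\tau_n}))$, and the Skorohod construction forces $|S^W(\tau)-S^W(\theta^{(n)}_{\tau_n})|\leq S^W(\theta^{(n)}_{\tau_n})(e^{2\sigma\sqrt{T/n}}-1)$, which is of order $n^{-1/2}S^W$. The difficulty is that the only pathwise estimate on $|u_{\tau_n}|S^W(\theta^{(n)}_{\tau_n})$ available from no-bankruptcy (\ref{2.13}) applied at the two bin endpoints is of order $V^{\tilde\pi_n}_{\lambda,\mu}\cdot n^{1/2}$, exactly cancelling the Skorohod gain. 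My plan to recover an $o(1)$ bound in expectation is to exploit the $\tilde P^W$-supermartingale property (\ref{2.18+++}) of $V^{\tilde\pi_n}_{\lambda,\mu}$ to control $|u_{\tau_n}|S^W$ in an averaged $\tilde P^W$-sense and transfer the bound to $P^W$ through the Radon--Nikodym density $Z$ of (\ref{2.2+}), whose moments are uniformly bounded on $[0,T]$; a H\"older/Cauchy--Schwarz estimate should then pick up the missing factor. Once $\sup_\tau B_n(\tau)\to 0$ is established, taking the supremum over $\tau\in\mathcal{T}^W_{[0,T]}$ in the pointwise bound delivers (\ref{3.3}).
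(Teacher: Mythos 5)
Your overall skeleton matches the paper's: transfer $R_n(x,\lambda,\mu)$ to the Brownian space via $\Pi_n$ and Proposition \ref{prop2+.2}, replace a (near-)optimal continuous stopping time by the discrete proxy $\nu_n=n\wedge\min\{k:\theta^{(n)}_k\geq\tau\}$, and control the payoff mismatch $E^W|Y^W(\tau)-Y^{W,n}(\nu_n T/n)|$ by (\ref{3.1+++}), continuity of $F$ and the uniform integrability coming from (\ref{2.3}). (A minor remark: the paper fixes one $\frac1n$-optimal $\tau_n$ per $n$ rather than proving uniformity in $\tau$, which spares you the $\sup_\tau$ claims; your $A_n$ estimate would go through either way.)

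The genuine gap is exactly the step you flag as the ``hard part'': bounding $B_n(\tau)=E^W|V^{\tilde\pi_n}_{\lambda,\mu}(\tau)-V^{\tilde\pi_n}_{\lambda,\mu}(\theta^{(n)}_{\nu_n})|$. Your proposed rescue does not close. The only quantitative control on the positions of the \emph{optimal} binomial portfolios is the pathwise admissibility bound $|u|S=O(\sqrt{n})\,V$ (which, as you note, exactly cancels the Skorohod gain) and an $L^1(\tilde P^W)$ bound on the cumulative turnover $\int_0^T S^W|d\gamma|$ of the type used in Section 5; to convert the latter into a bound on $E^W[|u_{\nu_n}|S^W(\theta^{(n)}_{\nu_n})]$ you must multiply by $\sup_t S^W/\inf_t S^W$ and by the density $Z$, and the resulting H\"older/Cauchy--Schwarz step requires an $L^{1+\epsilon}$ bound on the turnover that is simply not available. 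More fundamentally, a two-sided $L^1$ estimate on the wealth increment is not needed. The paper's resolution is one-sided: since $\tilde\pi_n$ trades only at the $\theta^{(n)}_k$, the wealth process is a $\tilde P^W$-supermartingale, so $V^{\tilde\pi_n}_{\lambda,\mu}(\tau_n)\geq\tilde E^W\big(V^{\tilde\pi_n}_{\lambda,\mu}(\theta^{(n)}_{\nu_n})\,\big|\,\mathcal{F}^W_{\tau_n\wedge\theta^{(n)}_n}\big)$; because $y\mapsto(c-y)^+$ is convex and non-increasing, conditional Jensen under $\tilde P^W$ gives
\begin{equation*}
\big(Y-V^{\tilde\pi_n}_{\lambda,\mu}(\tau_n)\big)^+\leq\tilde E^W\Big(\big(Y-V^{\tilde\pi_n}_{\lambda,\mu}(\theta^{(n)}_{\nu_n})\big)^+\,\Big|\,\mathcal{F}^W_{\tau_n\wedge\theta^{(n)}_n}\Big),
\end{equation*}
and switching back to $P^W$ only produces the extra factor $Z(\theta^{(n)}_{\nu_n})/Z(\tau_n\wedge\theta^{(n)}_n)$, which is handled like your $A_n$ term since $Z$ is continuous and $\sup_t Y^W(t)$ is integrable. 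This replaces your $B_n(\tau)$ entirely and is the missing idea; without it (or a genuinely new estimate on the optimal portfolios' positions), your argument does not yield (\ref{3.3}).
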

\begin{proof}
For any $n$ let $\tau_n\in\mathcal{T}^W_{[0,T]}$ such that
\begin{equation}\label{3.3+}
R(\tilde\pi_n,\lambda,\mu)<\frac{1}{n}+E^W[(Y^W(\tau_n)-{V}^{\tilde\pi_n}_{\lambda,\mu}(\tau_n))^{+}].
\end{equation}
Define $\nu_n=n\wedge\min\{k|\theta^{(n)}_k\geq \tau_n\}$, $n\in\mathbb{N}$.
Observe that $\nu_n\in\mathcal{T}^{W,n}_{0,n}$ and $\theta^{(n)}_{\nu_n}\geq\tau_n\wedge\theta^{(n)}_n$.
The portfolio value process $\{V^{\tilde\pi_n}_{\lambda,\mu}(t)\}_{t=0}^\infty$ is a
supermartingale with respect to the measure $\tilde{P}^W$.
Thus for any $n\in\mathbb{N}$
\begin{equation}\label{3.4}
V^{\tilde\pi_n}_{\lambda,\mu}(\tau_n)=V^{\tilde\pi_n}_{\lambda,\mu}(\tau_n\wedge\theta^{(n)}_n)\geq
\tilde{E}^W(V^{\tilde\pi_n}_{\lambda,\mu}(\theta^{(n)}_{\nu_n})|\mathcal{F}^W_{\tau_n\wedge\theta^{(n)}_n}).
\end{equation}
From (\ref{2.2+}), (\ref{3.4}) and the Jensen inequlity it follows
\begin{eqnarray}\label{3.5}
&E^W[(Y^W(\tau_n\wedge\theta^{(n)}_n)-{V}^{\tilde\pi_n}_{\lambda,\mu}(\tau_n))^{+}]=
\tilde E^W \bigg(\frac{1}{Z(\tau_n\wedge\theta^{(n)}_n)}\big(Y^W(\tau_n\wedge\theta^{(n)}_n)\\
&-{V}^{\tilde\pi_n}_{\lambda,\mu}(\tau_n)\big)^{+}\bigg)\leq \tilde E^W \bigg(\frac{1}{Z(\tau_n\wedge\theta^{(n)}_n)}\big(Y^W(\tau_n\wedge\theta^{(n)}_n)
-{V}^{\tilde\pi_n}_{\lambda,\mu}(\theta^{(n)}_{\nu_n})\big)^{+}\bigg)\nonumber\\
&= E^W \bigg(\frac{Z(\theta^{(n)}_{\nu_n})}{Z(\tau_n\wedge\theta^{(n)}_n)}\big(Y^W(\tau_n\wedge\theta^{(n)}_n)
-{V}^{\tilde\pi_n}_{\lambda,\mu}(\theta^{(n)}_{\nu_n})\big)^{+}\bigg).\nonumber
\end{eqnarray}
From (\ref{3.3+}) and (\ref{3.5}),
\begin{eqnarray}\label{3.6}
&R(\tilde\pi_n,\lambda,\mu)<\frac{1}{n}+E^W|Y^W(\tau_n)-Y^W(\tau_n\wedge\theta^{(n)}_n)|+   \\
&E^W\big(\big|\frac{Z(\theta^{(n)}_{\nu_n})}{Z(\tau_n\wedge\theta^{(n)}_n)}-1\big|\sup_{0\leq t\leq T}Y^W(t)\big)+
E^W|Y^W(\tau_n\wedge\theta^{(n)}_n)-Y^{W,n}(\nu_n T/n)|\nonumber \\
&+E^W[(Y^{W,n}(\nu_n T/n)-{V}^{\tilde\pi_n}_{\lambda,\mu}(\theta^{(n)}_{\nu_n}))^{+}]. \nonumber
\end{eqnarray}
From the definition it follows
\begin{eqnarray*}
&\tau_n-\tau_n\wedge\theta^{(n)}_n\leq |T-\theta^{(n)}_n|, \
\theta^{(n)}_{\nu_n}-\tau_n\wedge\theta^{(n)}_n\leq \max_{0\leq k<n}\theta^{(n)}_{k+1}-\theta^{(n)}_k\leq \frac{T}{n}+\\
&2\max_{1\leq{k}\leq{n}}|\theta^{(n)}_k-\frac{kT}{n}| \  \mbox{and} \
|\tau_n\wedge\theta^{(n)}_n-\nu_n T/n|\leq \frac{T}{n}+\max_{1\leq{k}\leq{n}}|\theta^{(n)}_k-\frac{kT}{n}|.
\end{eqnarray*}
From (\ref{3.1+++}) we get that the sequences
$\{\tau_n-\tau_n\wedge\theta^{(n)}_n\}_{n=1}^\infty$,
$\{\theta^{(n)}_{\nu_n}-\tau_n\wedge\theta^{(n)}_n\}_{n=1}^\infty$
and $\{\tau_n\wedge\theta^{(n)}_n-\nu_n T/n\}_{n=1}^\infty$ converge to $0$ in probability. From (\ref{3.1+++})
$S^{W,n}\rightarrow S^W$ (on the space $M[0,T]$) in probability. Since $F$ is continuous and the process $Z$ is
continuous we obtain that the sequences
$\{Y^W(\tau_n)-Y^W(\tau_n\wedge\theta^{(n)}_n)\}_{n=1}^\infty$,
$\{|\frac{Z(\theta^{(n)}_{\nu_n})}{Z(\tau_n\wedge\theta^{(n)}_n)}-1\big|\sup_{0\leq t\leq T}Y^W(t)\}_{n=1}^\infty$
and $\{Y^W(\tau_n\wedge\theta^{(n)}_n)-Y^{W,n}(\nu_n T/n)\}_{n=1}^\infty$ converge to $0$ in probability.
From (\ref{2.3}) it follows that the above sequences are uniformly integrable,
and so they converge to $0$ in $L^1(\Omega_W,P^W)$. Thus from (\ref{3.6})
\begin{equation}\label{3.6+}
\lim \sup_{n\rightarrow\infty} R(\tilde\pi_n,\lambda,\mu)-A_n\leq 0
\end{equation}
where $A_n=E^W[(Y^{W,n}(\nu_n T/n)-{V}^{\tilde\pi_n}_{\lambda,\mu}(\theta^{(n)}_{\nu_n}))^{+}]$.
Note that the process
$\{(Y^{W,n}(kT/n)$
$-{V}^{\tilde\pi_n}_{\lambda,\mu}(\theta^{(n)}_k))^{+}\}_{k=0}^n$
is adapted to the filtration $\{\mathcal{G}^{W,n}_k\}_{k=0}^n$, thus from standard dynamical programming
(see Peskir and Shiryaev 2006)
it follows
\begin{eqnarray}\label{3.7}
&A_n\leq\sup_{\zeta\in\mathcal{T}^{W,n}_{0,n}}
E^W[(Y^{W,n}(\zeta T/n)-{V}^{\tilde\pi_n}_{\lambda,\mu}(\theta^{(n)}_{\zeta}))^{+}]=\\
&\sup_{\zeta\in\mathcal{S}^{W,n}_{0,n}}
E^W[(Y^{W,n}(\zeta T/n)-{V}^{\tilde\pi_n}_{\lambda,\mu}(\theta^{(n)}_{\zeta}))^{+}].\nonumber
\end{eqnarray}
Recall the map $\Pi_n$ which was introduced after Theorem \ref{thm2.1}.
Notice that $\Pi_n:\mathcal{T}^\xi_{0,n}\rightarrow\mathcal{T}^{S,n}_{0,n}$ is a bijection
and for any random variable $U\in L^{\infty}(\mathcal{F}^{\xi}_n,P^{\xi}_n)$,
$E^W\Pi_n(U)=E^\xi_n U$. From (\ref{2.18+}), (\ref{2+.21}) and (\ref{3.1++})
we obtain
\begin{eqnarray}\label{3.9}
&\sup_{\zeta\in\mathcal{T}^{S,n}_{0,n}}
E^W[(Y^{W,n}(\zeta T/n)-{V}^{\tilde\pi_n}_{\lambda,\mu}(\theta^{(n)}_{\zeta}))^{+}]=\\
&\sup_{\sigma\in\mathcal{T}^{\xi}_{0,n}}E^W \big(\Pi_n[(Y^{\xi,n}(\sigma )
-{V}^{\pi_n}_{\lambda,\mu}(\sigma))^{+}]\big)=\nonumber\\
&\sup_{\sigma\in\mathcal{T}^{\xi}_{0,n}}E^\xi_n[(Y^{\xi,n}(\sigma )-{V}^{\pi_n}_{\lambda,\mu}(\sigma))^{+}]=R_n(\pi_n,\lambda,\mu)=
R_n(x,\lambda,\mu).
\nonumber
\end{eqnarray}
By combining (\ref{3.6+})--(\ref{3.9}) we complete the proof.
\end{proof}
Let $\lambda>0$ and $0<\mu<1$. Set $\lambda_n=(1+\lambda)\exp(-2\sigma\sqrt\frac{T}{n})-1$ and $\mu_n=1-(1-\mu)\exp(2\sigma\sqrt\frac{T}{n})$
(we assume that $n$ is sufficiently large such that $\lambda_n>0$ and $0<\mu_n<1$).
\begin{lem}\label{lem3.2}
For any initial capital $x$,
\begin{equation}\label{3.10}
\lim \sup_ {n\rightarrow\infty} R_n(x,\lambda_n,\mu_n)\leq R(x,\lambda,\mu).
\end{equation}
\end{lem}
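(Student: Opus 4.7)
The plan is to exhibit, for each $\ve>0$ and sufficiently large $n$, a binomial strategy $\hat\pi_n\in\mathcal{A}^{\xi,n}(x,\lambda_n,\mu_n)$ whose shortfall risk is at most $R(x,\lambda,\mu)+\ve+o_n(1)$. The role of the specific form of $\lambda_n,\mu_n$ is to absorb the maximal price variation within a Skorohod interval: any $t\in[\theta^{(n)}_{k-1},\theta^{(n)}_{k}]$ satisfies $|W^{*}(t)-W^{*}(\theta^{(n)}_{k})|\leq 2\sqrt{T/n}$, hence $S^W(t)/S^W(\theta^{(n)}_{k})\in[\exp(-2\sigma\sqrt{T/n}),\exp(2\sigma\sqrt{T/n})]$, and combined with the identities $(1+\lambda_n)\exp(2\sigma\sqrt{T/n})=1+\lambda$ and $(1-\mu_n)=(1-\mu)\exp(2\sigma\sqrt{T/n})$ this forces
\begin{equation*}
(1+\lambda_n)\,S^{W}(\theta^{(n)}_{k})\leq(1+\lambda)\,S^{W}(t),\qquad(1-\mu_n)\,S^{W}(\theta^{(n)}_{k})\geq(1-\mu)\,S^{W}(t).
\end{equation*}
Thus any buy (resp.\ sale) executed in the binomial model at price $S^W(\theta^{(n)}_{k})$ under $(\lambda_n,\mu_n)$ is at least as cheap (resp.\ profitable) as the same trade at any intermediate price under $(\lambda,\mu)$.

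Fix $\ve>0$ and pick $\pi\in\mathcal{A}^W(x,\lambda,\mu)$ with $R(\pi,\lambda,\mu)<R(x,\lambda,\mu)+\ve/2$. First I would approximate $\pi$ by a strategy $\tilde\pi\in\mathcal{A}^W(x,\lambda,\mu)$ trading only at finitely many deterministic times $0=t_0<t_1<\cdots<t_L\leq T$, whose trade at $t_j$ is $\mathcal{F}^W_{t_{j-1}}$-measurable (predictable on the partition), and satisfying $R(\tilde\pi,\lambda,\mu)<R(\pi,\lambda,\mu)+\ve/2$. This is a density argument for BV-adapted strategies based on the continuity and growth (\ref{2.3}) of $F$, with a small cash buffer maintaining admissibility. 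I expect this step to be the main obstacle: one must simultaneously coarsen the filtration to force predictability and preserve admissibility under transaction costs, without losing more than $\ve/2$ in the shortfall risk.

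For each $n$ large enough that $\max_k|\theta^{(n)}_{k+1}-\theta^{(n)}_{k}|<\min_j(t_j-t_{j-1})$, which holds with probability $1-o_n(1)$ by (\ref{3.1+++}), set $k_j(n)=\min\{k:\theta^{(n)}_{k}\geq t_j\}$; on the good event $k_{j-1}(n)<k_j(n)$ and $\theta^{(n)}_{k_j(n)-1}\geq t_{j-1}$, so $\tilde\gamma(t_j+)$ is $\mathcal{F}^W_{\theta^{(n)}_{k_j(n)-1}}$-measurable by predictability. Define $\hat\pi_n\in\mathcal{A}^{\xi,n}(x,\lambda_n,\mu_n)$ by having the binomial trade at time $(k_j(n)-1)T/n$ establish the new position $\Pi_n^{-1}(\tilde\gamma(t_j+))$, with no trades at other binomial times, and extend $\hat\pi_n$ on the bad event to the trivial all-cash portfolio. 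Since $t_j\in(\theta^{(n)}_{k_j(n)-1},\theta^{(n)}_{k_j(n)}]$ and the binomial trade uses price $S^W(\theta^{(n)}_{k_j(n)-1})$, the per-trade cost inequality above applies (with room to spare, the relevant factor being $\exp(\sigma\sqrt{T/n})$ rather than $\exp(2\sigma\sqrt{T/n})$). Summed over $j$ and combined with $\lambda_n<\lambda$ and $\mu_n<\mu$ for the liquidation terms, this yields the pathwise bound $\Pi_n(V^{\hat\pi_n}_{\lambda_n,\mu_n}(k))\geq V^{\tilde\pi}_{\lambda,\mu}(\theta^{(n)}_{k})$ at every binomial step $k$, giving both admissibility of $\hat\pi_n$ and a pointwise shortfall comparison.

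For any $\tau\in\mathcal{T}^{\xi}_{0,n}$, writing $E^{\xi}_n[(Y^{\xi,n}(\tau)-V^{\hat\pi_n}(\tau))^+]=E^W[(Y^{W,n}(\tau T/n)-\Pi_n(V^{\hat\pi_n}(\tau)))^+]$ and using the pathwise bound on the good event together with the (\ref{2.3})-based boundedness of the integrand on the bad one yields
\begin{equation*}
E^{\xi}_n[(Y^{\xi,n}(\tau)-V^{\hat\pi_n}(\tau))^+]\leq E^W[(Y^{W,n}(\tau T/n)-V^{\tilde\pi}_{\lambda,\mu}(\theta^{(n)}_{\tau}))^+]+o_n(1).
\end{equation*}
The first term on the right is at most $R(\tilde\pi,\lambda,\mu)+o_n(1)$, using $Y^{W,n}\to Y^W$ in $L^1$, which follows from (\ref{3.1+++}), the continuity of $F$, and the uniform integrability implied by (\ref{2.3}). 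Taking the supremum over $\tau$, then sending $n\to\infty$ followed by $\ve\to 0$, concludes the proof.
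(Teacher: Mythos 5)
Your core mechanism is the right one and coincides with the paper's: on a Skorohod interval one has $S^W(t)/S^W(\theta^{(n)}_{k})\in[e^{-2\sigma\sqrt{T/n}},e^{2\sigma\sqrt{T/n}}]$, so a trade executed at a Skorohod time under the reduced costs $(\lambda_n,\mu_n)$ dominates the same trade at any intermediate price under $(\lambda,\mu)$; this is exactly the paper's cost-comparison inequality (\ref{3.10++}), and your final passage via $Y^{W,n}\to Y^W$ in $L^1$ is also the paper's. The first genuine gap is the reduction you yourself flag as ``the main obstacle'': approximating an arbitrary admissible BV strategy by one trading at finitely many deterministic times with $\mathcal{F}^W_{t_{j-1}}$-measurable trades, while preserving the pathwise no-bankruptcy condition (\ref{2.7}) and losing at most $\ve/2$ of shortfall risk. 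You assert this and never prove it, and it is not a routine density argument in the presence of transaction costs. The paper shows it is also unnecessary: it simply samples the near-optimal continuous strategy at the Skorohod times, $u_n(k)=\gamma(\theta^{(n)}_k)$, which is automatically $\mathcal{F}^W_{\theta^{(n)}_k}$-measurable; the aggregated-trade inequality then gives $V^{\pi_n}_{\lambda_n,\mu_n}(\theta^{(n)}_k)\geq V^{\pi}_{\lambda,\mu}(\theta^{(n)}_k\wedge T)\geq 0$ directly, with no deterministic grid.

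The second gap is that, even granting your reduction, $\Pi_n^{-1}(\tilde\gamma(t_j+))$ is not well defined: the image of $\Pi_n$ consists of functions of the increments $W^{*}(\theta^{(n)}_1),W^{*}(\theta^{(n)}_2)-W^{*}(\theta^{(n)}_1),\dots$, i.e.\ $\mathcal{G}^{W,n}$-measurable random variables, whereas $\tilde\gamma(t_j+)$ is measurable with respect to the full Brownian $\sigma$-algebra $\mathcal{F}^W_{t_{j-1}}$, which is strictly larger. So your $\hat\pi_n$ does not land in $\mathcal{A}^{\xi,n}(x,\lambda_n,\mu_n)$. The paper resolves precisely this point by working in the intermediate class $\mathcal{A}^{W,n}$ of Skorohod-time strategies adapted to $\{\mathcal{F}^W_{\theta^{(n)}_k}\}$ and invoking the dynamic-programming identification $R^{W,n}(x,\lambda,\mu)=R_n(x,\lambda,\mu)$ of (\ref{3.2++}), which rests on the independence of $W^{*}(\theta^{(n)}_{k+1})-W^{*}(\theta^{(n)}_k)$ from $\mathcal{F}^W_{\theta^{(n)}_k}$; you need either that identification or a further, again nontrivial, approximation of your positions by functions of the Skorohod increments. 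A smaller, fixable point: ``extend $\hat\pi_n$ on the bad event to the all-cash portfolio'' is not an adapted prescription, since whether $\max_k|\theta^{(n)}_{k+1}-\theta^{(n)}_k|<\min_j(t_j-t_{j-1})$ holds is only known at time $\theta^{(n)}_n$; you would have to stop trading at the first time the failure is detected.
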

\begin{proof}
Choose $\epsilon>0$. There exists $\pi=(x,\{\gamma(t)\}_{t=0}^T)\in\mathcal{A}^W(x,\lambda,\mu)$
such that
\begin{equation}\label{3.10+}
R(\pi,\lambda,\mu)<\epsilon+R(x,\lambda,\mu).
\end{equation}
For simplicity we extend the portfolio $\pi$ to $\mathbb{R}_{+}$, by setting
$\gamma(t)=0$ for $t>T$, i.e. the portfolio value remains constant after the maturity date $T$.
Set $u_n(k)=\gamma(\theta^{(n)}_k)$, $n\in\mathbb{N}$, $0\leq k\leq n$.
For any $n$ define the adapted (to the filtration $\{\mathcal{F}^W_t\}_{t=0}^\infty$) process
$\{\gamma_n(t)\}_{t=0}^\infty$ by
\begin{equation}\label{3.10+-}
\gamma_n(t)=\sum_{k=0}^{n-1} \mathbb{I}_{\theta^{(n)}_k<t\leq\theta^{(n)}_{k+1}}u_n(k).
\end{equation}
Consider the portfolio $\pi_n=(x,\{\gamma_n(t)\}_{t=0}^\infty)$
in a BS model for which purchase and sale, of
the risky asset are subject to a proportional
transaction costs of rate $\lambda_n$ and
$\mu_n$, respectively.
Observe that for any $i<n$ we have the inequalities
$\exp(2\sigma\sqrt\frac{T}{n})\inf_{\theta^{(n)}_i\leq t\leq \theta^{(n)}_{i+1}} S^W(t)\geq
S^W(\theta^{(n)}_{i+1})$ and $\exp(-2\sigma\sqrt\frac{T}{n})\sup_{\theta^{(n)}_i\leq t\leq \theta^{(n)}_{i+1}} S^W(t)$\\$\leq
S^W(\theta^{(n)}_{i+1}).$
Thus for any $i<n$
\begin{eqnarray}\label{3.10++}
&(1-\mu)\int_{\theta^{(n)}_i}^{\theta^{(n)}_{i+1}}S^W(t)d\gamma^{-}(t)-(1+\lambda)
\int_{\theta^{(n)}_i}^{\theta^{(n)}_{i+1}}S^W(t)d\gamma^{+}(t)\leq\\
&(1-\mu_n)S^W(\theta^{(n)}_{i+1})\int_{\theta^{(n)}_i}^{\theta^{(n)}_{i+1}}d\gamma^{-}(t)
-(1+\lambda_n)S^W(\theta^{(n)}_{i+1})\int_{\theta^{(n)}_i}^{\theta^{(n)}_{i+1}}d\gamma^{+}(t)\leq \nonumber\\
&S^W(\theta^{(n)}_{i+1})\big((1-\mu_n)(u_n(i+1)-u_n(i))^{-}-(1+\lambda_n)(u_n(i+1)-u_n(i))^{+}\big)\leq\nonumber\\
&S^W(\theta^{(n)}_{i+1})\big((1-\mu_n)(u_n(i)^{+}-u_n(i+1)^{+})-(1+\lambda_n)(u_n(i)^{-}-u_n(i+1)^{-})\big).\nonumber
\end{eqnarray}
Set $u_n(-1)=0$. From (\ref{2.6}) and (\ref{3.10++}) it follows that for any $k\leq n$
\begin{eqnarray}\label{3.10+++}
&V^{\pi_n}_{\lambda_n,\mu_n}(\theta^{(n)}_k)=x+(1-\mu_n)\big(u_n(k-1)^{+}S^{W}(\theta^{(n)}_k)+\\
&\sum_{i=0}^{k-2}(u_n(i+1)-
u_n(i))^{-} S^{W}(\theta^{(n)}_{i+1})\big)-(1+\lambda_n)
\big(u_n(k-1)^{-}S^{W}(\theta^{(n)}_k)+\nonumber\\
&\sum_{i=0}^{k-2} (u_n(i+1)-u_n(i))^{+}
S^{W}(\theta^{(n)}_{i+1})\big)\geq
x+(1-\mu)\big(u_n(k)^{+}S^{W}(\theta^{(n)}_k)+\nonumber\\
&\sum_{i=0}^{k-1}\int_{\theta^{(n)}_i}^{\theta^{(n)}_{i+1}}S^W(t)d\gamma^{-}(t)\big)
 -(1+\lambda)\big(u_n(k)^{-}S^{W}(\theta^{(n)}_k)+\nonumber\\
&+\sum_{i=0}^{k-1}\int_{\theta^{(n)}_i}^{\theta^{(n)}_{i+1}}S^W(t)d\gamma^{+}(t)\big)=V^\pi_{\lambda,\mu}(\theta^{(n)}_k\wedge T)\geq 0.\nonumber
\end{eqnarray}
Thus $\pi_n\in\mathcal{A}^{W,n}(x,\lambda_n,\mu_n)$.
From (\ref{3.2}) and (\ref{3.2++}) we obtain that there exists a stopping time
$\tau_n\in\mathcal{T}^{W,n}_{0,n}$ such that
\begin{equation}\label{3.11}
E^W[(Y^{W,n}(\tau_n T/n)-{V}^{\pi_n}_{\lambda_n,\mu_n}(\theta^{(n)}_{\tau_n}))^{+}]\geq
R_n(x,\lambda_n,\mu_n)-\epsilon.
\end{equation}
Clearly $R(\pi,\lambda,\mu)\geq E^W[(Y^W(\theta^{(n)}_{\tau_n}\wedge T)-V^\pi_{\lambda,\mu}(\theta^{(n)}_{\tau_n}\wedge T))^{+}]$,
and so from (\ref{3.10+++})
\begin{equation}\label{3.11+}
R(\pi,\lambda,\mu)\geq E^W[(Y^W(\theta^{(n)}_{\tau_n}\wedge T)-V^{\pi_n}_{\lambda_n,\mu_n}(\theta^{(n)}_{\tau_n}))^{+}].
\end{equation}
From
(\ref{3.10+}), (\ref{3.11}) and (\ref{3.11+}) it follows
\begin{equation}\label{3.12}
R_n(x,\lambda_n,\mu_n)\leq R(x,\lambda,\mu)+ 2\epsilon+ E^W|Y^{W,n}(\tau_n T/n)-Y^W(\theta^{(n)}_{\tau_n}\wedge T)|.
\end{equation}
By using the same arguments as in Lemma 3.1 we get $\lim_{n\rightarrow\infty}E^W|Y^{W,n}(\tau_n T/n)-Y^W(\theta^{(n)}_{\tau_n}\wedge T)|=0$
and we complete the proof.
\end{proof}
Observe that for any $n\in\mathbb{N}$, $\lambda'>0$, $0<\mu'<1$ and $x\geq 0$, the functions
$R_n(x,\lambda',\cdot), R_n(x,\cdot,\mu')$ are non decreasing. Thus from Lemma \ref{lem3.2} we obtain
that for any $\lambda>0$, $0<\mu<1$ and $0<\epsilon<1-\mu$
\begin{equation}\label{3.12+}
R(x,\lambda+\epsilon,\mu+\epsilon)\geq \lim \sup_{n\rightarrow\infty} R_n(x,\lambda,\mu).
\end{equation}
Define the function $\bar{R}:\mathbb{R}_{+}\times (0,\infty)\times (0,\infty)\rightarrow \mathbb{R}_{+}$ by
\begin{equation}\label{3.12++}
\bar{R}(x,\lambda,\mu)=\lim_{\lambda'\downarrow\lambda}\lim_{\mu'\downarrow\mu}R(x,\lambda',\mu').
\end{equation}
The limit above is exists since the functions $R_n(x,\lambda',\cdot), R_n(x,\cdot,\mu')$ are non decreasing.
From (\ref{3.12+})
\begin{equation}\label{3.12+++}
\bar{R}(x,\lambda,\mu)\geq \lim \sup_{n\rightarrow\infty} R_n(x,\lambda,\mu).
\end{equation}
Next, fix $\lambda,\mu$ and let $\pi_n=\pi_n(x,\lambda,\mu)$ be the optimal portfolio which is given by
(\ref{2+.24}).
From Lemma \ref{lem3.1} and (\ref{3.12+++}) we obtain
\begin{eqnarray}\label{3.13--}
&R(x,\lambda,\mu)\leq \lim \inf_{n\rightarrow\infty} R(\psi_n(\pi_n),\lambda,\mu)\leq
\lim \inf_{n\rightarrow\infty} R_n(x,\lambda,\mu)\leq \\
&\bar{R}(x,\lambda,\mu) \ \
\mbox{and} \ \ R(x,\lambda,\mu)\leq \lim \sup_{n\rightarrow\infty} R(\psi_n(\pi_n),\lambda,\mu)\leq\nonumber\\
&\lim \sup_{n\rightarrow\infty} R_n(x,\lambda,\mu)\leq \bar{R}(x,\lambda,\mu). \nonumber
\end{eqnarray}
Thus in order to complete the proof of Theorems \ref{thm2.1}--\ref{thm2.2}, it remains to establish
the following stability result.
\begin{lem}\label{lem3.3}
For any $\lambda>0$, $0<\mu<1$ and $x\in\mathbb{R}_{+}$
\begin{equation}\label{3.13-}
\bar{R}(x,\lambda,\mu)=R(x,\lambda,\mu).
\end{equation}
\end{lem}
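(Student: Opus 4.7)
The inequality $\bar R(x,\lambda,\mu)\geq R(x,\lambda,\mu)$ is immediate from monotonicity of $R$ in the transaction-cost parameters: if $\lambda'\geq\lambda$ and $\mu'\geq\mu$ then $V^\pi_{\lambda',\mu'}(t)\leq V^\pi_{\lambda,\mu}(t)$ for every strategy $\pi$, so $\cA^W(x,\lambda',\mu')\subseteq\cA^W(x,\lambda,\mu)$ and $R(\pi,\lambda',\mu')\geq R(\pi,\lambda,\mu)$ pointwise, giving $R(x,\lambda',\mu')\geq R(x,\lambda,\mu)$ and this survives passage to the iterated limit. The substantive content of the lemma is the reverse direction: for any $\ve>0$ one must exhibit $\del>0$ and $\tilde\pi\in\cA^W(x,\lambda+\del,\mu+\del)$ with $R(\tilde\pi,\lambda+\del,\mu+\del)\leq R(x,\lambda,\mu)+\ve$. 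The case $x=0$ is trivial (only $\gamma\equiv 0$ is admissible, and $R(0,\lambda,\mu)=\sup_\tau E^W Y^W(\tau)$ does not depend on $(\lambda,\mu)$), so I assume $x>0$.

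My plan is to combine a \emph{scaling} trick with a \emph{truncation}. Start from an $(\ve/4)$-optimal $\pi=(x,\gamma)\in\cA^W(x,\lambda,\mu)$, and for parameters $M>0$ and $\alpha\in(0,1)$ to be fixed later set
\[
\tau_M=\inf\Bigl\{t\in[0,T]:\int_0^t S^W(s)\,d|\gamma|(s)+|\gamma(t)|S^W(t)>M\Bigr\}\wedge T,
\]
and define $\tilde\gamma(t)=\alpha\gamma(t)\mathbb{I}_{t\leq\tau_M}$, so that $\tilde\pi=(x,\tilde\gamma)$ runs a scaled copy of $\pi$ and liquidates the residual stock position at $\tau_M$ (encoded as a right-jump of $\tilde\gamma$ at $\tau_M$). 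A direct calculation using the linearity of (\ref{2.6}) in $(\gamma^+,\gamma^-)$, together with a sign-of-$\gamma(\tau_M)$ case split to handle the liquidation atom, gives
\[
V^{\tilde\pi}_{\lambda,\mu}(t)=(1-\alpha)x+\alpha V^\pi_{\lambda,\mu}(t\wedge\tau_M)\geq(1-\alpha)x\quad\text{for every }t,
\]
so $\tilde\pi$ carries a uniform cash cushion of $(1-\alpha)x$. The definition of $\tau_M$ also yields $\int_0^t S^W d|\tilde\gamma|+|\tilde\gamma(t)|S^W(t)\leq\alpha M$ for every $t$, from which
\[
V^{\tilde\pi}_{\lambda+\del,\mu+\del}(t)\geq V^{\tilde\pi}_{\lambda,\mu}(t)-\alpha\del M\geq(1-\alpha)x-\alpha\del M,
\]
so the constraint $\del<(1-\alpha)x/(\alpha M)$ ensures $\tilde\pi\in\cA^W(x,\lambda+\del,\mu+\del)$.

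For the shortfall, split an arbitrary $\tau\in\cT^W_{[0,T]}$ on $\{\tau\leq\tau_M\}$ and its complement. On $\{\tau\leq\tau_M\}$ one has $V^{\tilde\pi}_{\lambda,\mu}(\tau)=(1-\alpha)x+\alpha V^\pi_{\lambda,\mu}(\tau)$ and the convex combination identity $Y^W-(1-\alpha)x-\alpha V^\pi=\alpha(Y^W-V^\pi)+(1-\alpha)(Y^W-x)$ together with $Y^W\geq 0$ produces the bound
$(Y^W-V^{\tilde\pi}_{\lambda,\mu}(\tau))^+\leq\alpha(Y^W-V^\pi)^++(1-\alpha)Y^W$, contributing at most $\alpha R(\pi,\lambda,\mu)+(1-\alpha)E^W\sup_s Y^W(s)$. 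On $\{\tau>\tau_M\}$ the crude bound $(Y^W-V^{\tilde\pi}_{\lambda,\mu}(\tau))^+\leq Y^W(\tau)$ contributes at most $E^W[\sup_s Y^W(s)\mathbb{I}_{\tau_M<T}]$. Adding the perturbation $\alpha\del M$ from the transaction-cost shift yields
\[
R(\tilde\pi,\lambda+\del,\mu+\del)\leq\alpha R(\pi,\lambda,\mu)+(1-\alpha)E^W\sup_s Y^W(s)+E^W[\sup_s Y^W(s)\mathbb{I}_{\tau_M<T}]+\alpha\del M.
\]
Now I choose, in order: $M$ large so the third term is below $\ve/4$ (this uses $\tau_M\uparrow T$ a.s., which follows from a.s.\ finiteness of $\sup_{t\leq T}[\int_0^t S^W d|\gamma|+|\gamma(t)|S^W(t)]$, together with integrability of $\sup_s Y^W(s)$ from (\ref{2.3}) and $E^W\sup_s S^W(s)<\infty$); then $\alpha$ close to $1$ so the second term is $<\ve/4$; finally $\del>0$ small enough to enforce both the admissibility constraint and $\alpha\del M<\ve/4$. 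Using $R(\pi,\lambda,\mu)\leq R(x,\lambda,\mu)+\ve/4$ this gives $R(x,\lambda+\del,\mu+\del)\leq R(x,\lambda,\mu)+\ve$, and letting $\ve\downarrow 0$ closes the proof.

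The main obstacle is the post-liquidation tail on $\{\tau>\tau_M\}$, where the truncated modified wealth is frozen while $Y^W$ keeps fluctuating; this is handled by the uniform integrability of $\sup_s Y^W(s)$ under $P^W$. A secondary but essential point is that the scaling by $\alpha$ cannot be skipped: without the cushion $(1-\alpha)x$ there is no room to absorb the $O(\del M)$ wealth loss caused by the higher transaction costs while preserving the no-bankruptcy condition, and this is precisely the feature of proportional transaction costs that makes the stability question nontrivial.
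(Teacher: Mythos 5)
Your proof is correct and follows essentially the same route as the paper: scale the near-optimal strategy by a factor slightly below one to free a cash cushion, truncate it at the first time the cumulative traded volume in cash exceeds a threshold, use left-continuity to bound the extra transaction costs by that cushion, and control the error terms via convexity of $y\mapsto y^{+}$ and integrability of $\sup_{t}Y^W(t)$. The only differences are cosmetic: the paper calibrates the cost perturbation as $(1-q)x/n$ and sends $q\uparrow 1$ before $n\to\infty$, and it bounds the post-truncation error by $E^W\sup_t|Y^W(t)-Y^W(t\wedge\tau_n)|$ rather than by your event split on $\{\tau>\tau_M\}$.
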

\begin{proof}
The inequality $\bar{R}(x,\lambda,\mu)\geq R(x,\lambda,\mu)$, is trivial. Thus it is sufficient to show that
$\bar{R}(x,\lambda,\mu)\leq R(x,\lambda,\mu)$. Fix $\lambda,\mu,x$ and
choose $\epsilon>0$. For $x=0$ the statement is trivial since $R(0,\cdot,\cdot)\equiv \sup_{\tau\in\mathcal{T}^W_{[0,T]}}E^W Y^W(\tau)$.
Assume that $x>0$.
There exists $\pi=(x,\{\gamma(t)\}_{t=0}^T)\in\mathcal{A}^W(x,\lambda,\mu)$
such that
\begin{equation}\label{3.13}
R(\pi,\lambda,\mu)<R(x,\lambda,\mu)+\epsilon.
\end{equation}
Set
\begin{equation}\label{3.13+}
\lambda^{(n)}_q=\lambda+\frac{(1-q)x}{n} \ \mbox{and} \ \mu^{(n)}_q=\mu+\frac{(1-q)x}{n}, \ n\in\mathbb{N}, \ 0<q<1.
\end{equation}
We assume that $n$ is sufficiently large such that $\mu^{(n)}_q<1$ for any $0<q<1$.
Introduce the stopping times
\begin{equation}\label{3.13++}
\tau_n=T\wedge\inf\bigg\{t\big|\int_{0}^tS^W(u)|d\gamma|(u)+|\gamma(t)|S^W(t)\geq n\bigg\}, \ \ n\in\mathbb{N}.
\end{equation}
The stochastic process $\big\{\int_{0}^tS^W(u)|d\gamma|(u)+|\gamma(t)|S^W(t)\big\}_{t=0}^T$
is left continuous, and so for any $t\leq T$,
\begin{equation}\label{3.13+++}
\int_{0}^{t\wedge\tau_n}S^W(u)|d\gamma|(u)+|\gamma(t\wedge\tau_n)|S^W(t\wedge\tau_n)\leq n.
\end{equation}
Notice that
\begin{equation}\label{3.14--}
\lim_{n\rightarrow\infty}\tau_n=T \ \ \mbox{a.s.}
\end{equation}
From (\ref{3.13+}) and (\ref{3.13+++}) it follows that for any $0\leq t\leq T$
\begin{eqnarray}\label{3.14-}
&(\mu^{(n)}_q-\mu)\big( \int_{0}^{t\wedge\tau_n}S^W(u)d\gamma^{-}(u)+\gamma(t\wedge\tau_n)^{+}S^W(t\wedge\tau_n)\big)+
\\
&(\lambda^{(n)}_q-\lambda)\big(\int_{0}^{t\wedge\tau_n}S^W(u)d\gamma^{+}(u)+\gamma(t\wedge\tau_n)^{-}S^W(t\wedge\tau_n)\big) \leq (1-q)x.
\nonumber
\end{eqnarray}
For any $n\in\mathbb{N}$ and $0<q<1$, $\{q\gamma(t)\mathbb{I}_{t\leq\tau_n}\}_{t=0}^T$ is
an adapted process of bounded variation with left continuous paths. Consider
the portfolio $\pi^{(n)}_q=(x,\{q\gamma(t)\mathbb{I}_{t\leq\tau_n}\}_{t=0}^T)$.
From (\ref{2.6}) and (\ref{3.14-}) we obtain
\begin{eqnarray}\label{3.15}
&V^{\pi^{(n)}_q}_{\lambda^{(n)}_q,\mu^{(n)}_q}(t)=V^{\pi^{(n)}_q}_{\lambda^{(n)}_q,\mu^{(n)}_q}(t\wedge\tau_n)=
qx+(1-q)x+\\
&q(1-\mu^{(n)}_q)\big(\int_{0}^{t\wedge\tau_n} S^W(u)d\gamma^{-}(u)+\gamma(t\wedge\tau_n)^{+}S^W(t\wedge\tau_n)\big)-\nonumber\\
&q(1+\lambda^{(n)}_q)
\big(\int_{0}^{t\wedge\tau_n} S^W(u)d\gamma^{+}(u)+\gamma(t\wedge\tau_n)^{-}S^W(t\wedge\tau_n)\big)\geq qx+\nonumber\\
&q(1-\mu)
\big(\int_{0}^{t\wedge\tau_n}S^W(u)d\gamma^{-}(u)+\gamma(t\wedge\tau_n)^{+}S^W(t\wedge\tau_n)\big)-q(1+\lambda)\times\nonumber\\
&\big(\int_{0}^{t\wedge\tau_n} S^W(u)d\gamma^{+}(u)+\gamma(t\wedge\tau_n)^{-}S^W(t\wedge\tau_n)\big)=q V^\pi_{\lambda,\mu}(t\wedge\tau_n)\geq 0.\nonumber
\end{eqnarray}
We conclude that $\pi^{(n)}_q\in\mathcal{A}^W(x,\lambda^{(n)}_q,\mu^{(n)}_q)$. From (\ref{3.13})
and (\ref{3.15})
\begin{eqnarray}\label{3.16}
&R(x,\lambda^{(n)}_q,\mu^{(n)}_q)\leq \sup_{\tau\in\mathcal{T}^W_{[0,T]}}
E^W[(Y^W(\tau)-q V^\pi_{\lambda,\mu}(\tau\wedge\tau_n))^{+}]\leq \\
&q \sup_{\tau\in\mathcal{T}^W_{[0,T]}} E^W[(Y^W(\tau)- V^\pi_{\lambda,\mu}(\tau\wedge\tau_n))^{+}]+\nonumber\\
&(1-q)E^W\sup_{0\leq t\leq T}Y^W(t)
\leq \epsilon+R(x,\lambda,\mu)+\nonumber\\
&E^W\sup_{0\leq t\leq T}|Y^W(t)-Y^W(t\wedge\tau_n)|
+(1-q)E^W\sup_{0\leq t\leq T}Y^W(t).\nonumber
\end{eqnarray}
For any $n$, $\bar{R}(x,\lambda,\mu)= \lim_{q\uparrow 1} R(x,\lambda^{(n)}_q,\mu^{(n)}_q)$, and so from
(\ref{3.16}) it follows that for any $n$,
$\bar{R}(x,\lambda,\mu)\leq\epsilon+R(x,\lambda,\mu)+E^W\sup_{0\leq t\leq T}|Y^W(t)-Y^W(t\wedge\tau_n)|.$
Thus
\begin{equation}\label{3.17}
\bar{R}(x,\lambda,\mu)\leq\epsilon+R(x,\lambda,\mu)+\lim \inf_{n\rightarrow\infty}E^W\sup_{0\leq t\leq T}|Y^W(t)-Y^W(t\wedge\tau_n)|.
\end{equation}
From (\ref{3.14--}) we obtain that $\lim_{n\rightarrow\infty}\sup_{0\leq t\leq T}|Y^W(t)-Y^W(t\wedge\tau_n)|=0$ a.s.
Thus $\lim_{n\rightarrow\infty}E^W\sup_{0\leq t\leq T}|Y^W(t)-Y^W(t\wedge\tau_n)|=0$,
this together with (\ref{3.17}) completes the proof.
\end{proof}
\begin{rem}\label{rem3.1}
Consider the BS model in the absence of transaction costs (complete market).
In this case a self financing strategy $\pi$ with an initial capital $x$ is a
pair $(x,\{\gamma(t)\}_{t=0}^T)$ such that the process
$\{\gamma(t)\}_{t=0}^T$ is  progressively measurable with respect to
the filtration $\mathcal{F}^{W}_t$, $t\geq{0}$
and satisfy
\begin{equation}\label{3.18}
\int_{0}^{T}\big(\gamma(t){S}^{W}(t)\big)^2dt<\infty \ \ \ \ \mbox{a.s.}
\end{equation}
The portfolio value $V^\pi(t)$ for a strategy $\pi=(x,\{\gamma(t)\}_{t=0}^T)$ at time $t\in{[0,T]}$
 is given by
\begin{equation}\label{3.19}
{V}^{\pi}(t)=x+\int_{0}^{t}
\gamma(u)d{{S}^W(u)}.
\end{equation}
 A self financing
strategy $\pi$ is called \textit{admissible} if ${V}^{\pi}(t)\geq{0}$
for all $t\in{[0,{T}]}$ and the set of such strategies with an
initial capital $x$ will be denoted by $\mathcal{A}^W(x)$. The shortfall risk is defined by
\begin{equation}\label{3.20}
R(\pi)=\sup_{\tau\in\mathcal{T}^W_{[0,T]}}E^W[(Y^W(\tau)-{V}^{\pi}(\tau))^+] \ \ \mbox{and} \ \
R(x)=\inf_{\pi\in\mathcal{A}^W(x)}R(\pi).
\end{equation}
Let $\mathcal{A}^W(x,0,0)\subset\mathcal{A}^W(x)$ be the set of all
portfolios $(x,\{\gamma(t)\}_{t=0}^T)$ such that $\{\gamma(t)\}_{t=0}^T$
is an adapted process of bounded variation with left continuous paths and $\gamma(0)=0$.
Note that for any $(x,\{\gamma(t)\}_{t=0}^T)\in\mathcal{A}^W(x,0,0)$ the portfolio values which are given by
(\ref{2.6}) with $\lambda=\mu=0$ and (\ref{3.19}) are coincide. From Dolinsky and Kifer (2008) (Theorem  2.2) it
follows that for any initial capital $x\in\mathbb{R}_{+}$ and $\epsilon>0$ there exists $n\in\mathbb{N}$ and a portfolio $\pi=(x,\{\gamma(t)\}_{t=0}^T)$
of the form
\begin{equation}\label{3.21}
\gamma(t)=\sum_{i=0}^{n-1} \mathbb{I}_{\theta^{(n)}_i< t\leq\theta^{(n)}_{i+1}}u_{i+1}
\end{equation}
where for any $1\leq i\leq n$, $u_i$ is a random variable $\mathcal{F}^W_{\theta^{(n)}_{i-1}}$
measurable, such that $R(\pi)<R(x)+\epsilon$.
Thus $R(x)=\inf_{\pi\in\mathcal{A}^W(x,0,0)}R(\pi)$, and so by following the steps of the proof of Lemma \ref{lem3.3}
we get that for any initial capital $x\geq 0$
\begin{equation}\label{3.22}
R(x)=\lim_{\lambda\downarrow 0}\lim_{\mu\downarrow 0}R(x,\lambda,\mu).
\end{equation}
Consider an American call option $Y^W(t)=(S^W(t)-K e^{-rt})^{+}$, $t\leq T$
with parameters $K,r>0$.
Clearly $V^{*}=\tilde{E}^W Y^W(T)$ is the price of the above call option in the complete BS model. From (\ref{3.22})
it follows that $\lim_{\lambda\downarrow 0}\lim_{\mu\downarrow 0}R(V^{*},\lambda,\mu)=0$. In particular we obtain
that in the presence of transaction costs, for an initial capital $x=V^{*}$ and for sufficiently small
$\lambda,\mu>0$ the buy and hold strategies are not optimal (unlike
for the superhedging case) for the shortfall risk measure.
\end{rem}
\section{Proof of Theorem 2.1}\label{sec4}
In this section we assume that the parameters
$x,\lambda,\mu$ are fixed.
Let $I\subset [0,T]$ be a dense set in $[0,T]$ and let $\mathcal{T}^W_I\subset\mathcal{T}^W_{[0,T]}$
be the set of all stopping times with a finite number of values which belongs to $I$.
\begin{lem}\label{lem4.1}
For any $\pi\in\mathcal{A}^W(x,\lambda,\mu)$,
\begin{equation}\label{4.1}
R(\pi,\lambda,\mu)=\sup_{\tau\in \mathcal{T}^W_I}E^W[(Y^W({\tau})-V^\pi_{\lambda,\mu}({\tau}))^{+}].
\end{equation}
\end{lem}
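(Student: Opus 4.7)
The inequality $\sup_{\tau\in \mathcal{T}^W_I}E^W[(Y^W(\tau)-V^\pi_{\lambda,\mu}(\tau))^{+}]\leq R(\pi,\lambda,\mu)$ is immediate from the inclusion $\mathcal{T}^W_I\subseteq \mathcal{T}^W_{[0,T]}$, so the task is the reverse inequality. My plan is to approximate an arbitrary $\tau\in\mathcal{T}^W_{[0,T]}$ from above by stopping times valued in $I$ and pass to the limit using the regularity of the integrands.

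\textbf{Step 1 (regularity of $V^\pi_{\lambda,\mu}$).} Since $\gamma$ is left continuous and of bounded variation, the positive and negative variations $\gamma^{+}$ and $\gamma^{-}$ defined in (\ref{2.5}) are left continuous and nondecreasing, so the Stieltjes integrals $\int_{0}^{\cdot}S^W(u)d\gamma^{\pm}(u)$ are left continuous; combined with continuity of $S^W$ and left continuity of $\gamma^{\pm}$, formula (\ref{2.6}) shows that $V^\pi_{\lambda,\mu}(\cdot)$ is left continuous on $[0,T]$. Moreover, at any jump time $s$ of $\gamma$ a direct computation from (\ref{2.6}) yields $V^\pi_{\lambda,\mu}(s^{+})\leq V^\pi_{\lambda,\mu}(s)$, because a transfer at $s$ costs the bid--ask spread.

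\textbf{Step 2 (construction of $\tau_n$).} Fix a countable dense subset $I$; by density choose, for each $n$, a finite partition $0=t^{(n)}_0<t^{(n)}_1<\ldots<t^{(n)}_{k_n}=T$ with mesh below $1/n$ and with $t^{(n)}_j\in I$ for $1\leq j\leq k_n-1$ (treating the endpoints $0,T$ as accessible, e.g. by assuming $\{0,T\}\subseteq I$, which costs no generality). Define
\begin{equation*}
\tau_n=\sum_{j=1}^{k_n}t^{(n)}_j\,\mathbb{I}_{\{t^{(n)}_{j-1}<\tau\leq t^{(n)}_j\}}+0\cdot\mathbb{I}_{\{\tau=0\}}.
\end{equation*}
Since $\{\tau_n\leq t^{(n)}_j\}=\{\tau\leq t^{(n)}_j\}\in\mathcal{F}^W_{t^{(n)}_j}$, each $\tau_n$ is a stopping time with finite range contained in $I$, and $\tau_n\downarrow\tau$ pointwise.

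\textbf{Step 3 (pathwise limit).} By continuity of $Y^W$, $Y^W(\tau_n)\to Y^W(\tau)$ almost surely. Left continuity of $V^\pi_{\lambda,\mu}$ together with $\tau_n\downarrow\tau$ gives $V^\pi_{\lambda,\mu}(\tau_n)\to V^\pi_{\lambda,\mu}(\tau^{+})\leq V^\pi_{\lambda,\mu}(\tau)$ almost surely (the inequality being an equality on $\{\tau_n=\tau\text{ eventually}\}$). Hence
\begin{equation*}
\liminf_{n\to\infty}\bigl(Y^W(\tau_n)-V^\pi_{\lambda,\mu}(\tau_n)\bigr)^{+}\geq\bigl(Y^W(\tau)-V^\pi_{\lambda,\mu}(\tau)\bigr)^{+}\quad\text{a.s.}
\end{equation*}

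\textbf{Step 4 (integrable majorant and conclusion).} The bound $0\leq(Y^W(\tau_n)-V^\pi_{\lambda,\mu}(\tau_n))^{+}\leq\sup_{0\leq t\leq T}Y^W(t)$, together with the hypothesis (\ref{2.3}) and integrability of $\sup_{0\leq t\leq T}S^W(t)$, allows Fatou's lemma to give
\begin{equation*}
\liminf_{n\to\infty}E^W\bigl[(Y^W(\tau_n)-V^\pi_{\lambda,\mu}(\tau_n))^{+}\bigr]\geq E^W\bigl[(Y^W(\tau)-V^\pi_{\lambda,\mu}(\tau))^{+}\bigr].
\end{equation*}
Since $\tau_n\in\mathcal{T}^W_I$, the left--hand side is bounded above by $\sup_{\sigma\in\mathcal{T}^W_I}E^W[(Y^W(\sigma)-V^\pi_{\lambda,\mu}(\sigma))^{+}]$, and taking the supremum over $\tau\in\mathcal{T}^W_{[0,T]}$ completes the proof. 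The main subtlety is the verification of Step~1 (left continuity of the wealth process and monotonicity under transfers); once this is in hand the rest is a standard approximation and dominated/Fatou convergence argument.
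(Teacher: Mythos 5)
Your proof is correct and takes essentially the same route as the paper: round $\tau$ up to an $I$-valued stopping time, use left continuity and the negative-jump property of $V^\pi_{\lambda,\mu}$ from (\ref{2.6}) to get the pathwise inequality $\liminf_n(Y^W(\tau_n)-V^\pi_{\lambda,\mu}(\tau_n))^{+}\geq(Y^W(\tau)-V^\pi_{\lambda,\mu}(\tau))^{+}$, and conclude by Fatou (no integrable majorant is even needed there). The one quibble is that assuming $\{0,T\}\subseteq I$ is not literally without loss of generality, since enlarging $I$ enlarges $\mathcal{T}^W_I$ and the lemma is applied in Section 5 to a specific $I$ determined by (\ref{4.8}); the paper avoids this by capping $\tau_n$ at the maximal element $a_n$ of a finite $\tfrac{1}{n}$-net $I_n\subset I$, so that $a_n\to T$ and left continuity of $V^\pi_{\lambda,\mu}$ covers the event $\{\tau>a_n\}$.
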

\begin{proof}
Clearly $R(\pi,\lambda,\mu)\geq\sup_{\tau\in \mathcal{T}^W_I}E^W[(Y^W({\tau})-V^\pi_{\lambda,\mu}({\tau}))^{+}]$.
Thus it is sufficient to show that $R(\pi,\lambda,\mu)\leq\sup_{\tau\in \mathcal{T}^W_I}E^W[(Y^W({\tau})-V^\pi_{\lambda,\mu}({\tau}))^{+}]$.
Choose $\epsilon>0$. There exists $\tau\in\mathcal{T}^W_{[0,T]}$ such that
\begin{equation}\label{4.2}
R(\pi,\lambda,\mu)< E^W[(Y^W({\tau})-V^\pi_{\lambda,\mu}({\tau}))^{+}]+\epsilon.
\end{equation}
For any $n$ there exists a finite set $I_n\subset I$ for which $\bigcup_{z\in I_n} (z-\frac{1}{n},z+\frac{1}{n})\supseteq [0,T]$.
Let $a_n$ be the maximal element of $I_n$. Define $\tau_n=\min\{t\in I_n|t\geq\tau\}\mathbb{I}_{\tau_n\leq a_n}+a_n \mathbb{I}_{\tau_n>a_n}$.
Clearly, $\tau_n\leq a_n$ a.s. and for $t\in I_n\setminus\{a_n\}$ we have $\{\tau_n\leq t\}=\{\tau\leq t\}\in \mathcal{F}^W_t$.
Thus $\tau_n\in \mathcal{T}^W_I$. Furthermore, $|\tau_n-\tau|\leq \frac{2}{n}$ and so $\tau_n\rightarrow\tau$ a.s.
From (\ref{2.6}) it follows that the stochastic process $\{V^\pi_{\lambda,\mu}(t)\}_{t=0}^T$
is left continuous with right hand limits and
has only negative jumps (in discontinuity points). Thus $V^\pi_{\lambda,\mu}(\tau)\geq \lim \sup_{n\rightarrow\infty} V^\pi_{\lambda,\mu}(\tau_n)$ a.s.
By using (\ref{4.2}) and Fatou's lemma we obtain
\begin{eqnarray}\label{4.3}
&R(\pi,\lambda,\mu)< \epsilon+E^W[\lim \inf_{n\rightarrow\infty}(Y^W(\tau_n)-V^\pi_{\lambda,\mu}({\tau_n}))^{+}]\leq\epsilon+\\
&\lim \inf_{n\rightarrow\infty}E^W[(Y^W(\tau_n)-V^\pi_{\lambda,\mu}(\tau_n))^{+}]\leq
\epsilon+\sup_{\tau\in \mathcal{T}^W_I}E[(Y^W({\tau})-V^\pi_{\lambda,\mu}({\tau}))^{+}]\nonumber
\end{eqnarray}
and the result follows by letting $\epsilon\downarrow{0}$.
\end{proof}
Next, let
$\{\pi_n=(x,\gamma_n)\}_{n=1}^\infty\subset\mathcal{A}^W(x,\lambda,\mu)$ be
a sequence such that
\begin{equation}\label{4.4}
\lim_{n\rightarrow\infty} R(\pi_n,\lambda,\mu)=R(x,\lambda,\mu).
\end{equation}
From the integration by part formula we
get that for any $n\in\mathbb{N}$ and $t\in [0,T]$
$\gamma_n(t)S^W(t)=\int_{0}^tS^W(u)d\gamma_n(u)+\int_{0}^t\gamma_n(u)dS^W(u).$
This together with (\ref{2.6})--(\ref{2.7}) yields
\begin{equation}\label{4.5}
\min(\lambda,\mu)\int_{0}^t S^W(u)|d\gamma_n|(u)\leq x+\int_{0}^t\gamma_n(u)dS^W(u), \ \ t\in[0,T], \ \ n\in\mathbb{N}.
\end{equation}
From (\ref{4.5}) it follows that any $n$, the local martingale (with respect to the probability measure
$\tilde{P}^W$) $\{\int_{0}^t\gamma_n(u)dS^W(u)\}_{t=0}^T$ is bounded from below, and so it is
a supermartingale.
Thus from (\ref{4.5}), $\tilde{E}^W\int_{0}^T S^W(u)|d\gamma_n|(u)\leq \frac{x}{\min(\lambda,\mu)}$, $n\in\mathbb{N}$.
From Markov's inequality we get that the set
$conv\big\{\int_{0}^T S^W(u)|d\gamma_n|(u)\big\}_{n=1}^\infty$ is bounded in $L^{0}(\tilde{P}^W)$.
This together with Lemma 3.1 in Guasoni (2002B) yields that the set
$conv\big\{\int_{0}^T |d\gamma_n|(u)\big\}_{n=1}^\infty$ is also bounded in $L^{0}(\tilde{P}^W)$.
From Lemma 3.4 in
Guasoni (2002B) there is a sequence $\eta_n\in conv(\gamma_n,\gamma_{n+1},...)$
such that $\eta_n$ converges a.s. in $dt d\tilde{P}^W$ to a finite variation process.
In fact, from the proof of this lemma, we get a stronger result. We obtain
that there exists a non decreasing, left continuous adapted processes $\{\alpha(t)\}_{t=0}^T$ and $\{\beta(t)\}_{t=0}^T$ 
with $\alpha(0)=\beta(0)=0,$
such that
\begin{equation}\label{4.7}
\lim_{n\rightarrow\infty}\eta^{+}_n= \alpha \ \ \mbox{and} \ \ \lim_{n\rightarrow\infty}\eta^{-}_n=\beta, \ \ \ \mbox{a.s} \ \mbox{in} \ dtd\tilde{P}^W
\end{equation}
where
\begin{equation}\label{4.7+}
\eta_n^{+}(t)=\frac{\eta_n(t)+ \int_{0}^t |d\eta_n|(s)}{2} \ \ \mbox{and} \ \
\eta_n^{-}(t)=\frac {\int_{0}^t |d\eta|(s)-\eta_n(t)}{2}, \ \ t\in[0,T], \ \ n\in\mathbb{N}.
\end{equation}
In particular, there exists a countable dense set $0\in I\subset [0,T]$ such that
\begin{equation}\label{4.8}
P^W\{\lim_{n\rightarrow\infty}\eta^{+}_n(t)= \alpha(t), \ \ \forall{t}\in I\}=1 \ 
\mbox{and} \
P^W\{\lim_{n\rightarrow\infty}\eta^{-}_n(t)= \beta(t), \ \ \forall{t}\in I\}=1. 
\end{equation}
Define $\gamma=\alpha-\beta$. Clearly, $\gamma$
is an adapted process of bounded variation with left continuous paths and $\gamma(0)=0$.
Finally, we prove that $\pi:=(x,\gamma)$ is an optimal portfolio, i.e., 
$\pi\in\mathcal{A}^W(x,\lambda,\mu)$ and $R(\pi,\lambda,\mu)=R(x,\lambda,\mu)$. 
Clearly for any $n\in\mathbb{N}$,
the wealth process of the portfolio $\tilde\pi_n:=(x,\eta_n)$ is satisfying
$V^{\tilde\pi_n}_{\lambda,\mu}\in conv\{V^{\pi_n}_{\lambda,\mu}, V^{\pi_{n+1}}_{\lambda,\mu},...\}$,
and so $\tilde\pi_n\in\mathcal{A}^W(x,\lambda,\mu)$.
The shortfall risk measure $R(\cdot,\lambda,\mu)$ is a a convex
functional of the wealth process $V^{\cdot}_{\lambda,\mu}$. Thus,
\begin{equation}\label{4.9}
R(\tilde\pi_n,\lambda,\mu)\leq \sup_{k\geq n} R(\pi_k,\lambda,\mu).
\end{equation}
From (\ref{4.4}) and (\ref{4.9}),
\begin{equation}\label{4.10}
\lim_{n\rightarrow\infty} R(\tilde\pi_n,\lambda,\mu)=R(x,\lambda,\mu).
\end{equation}
From (\ref{4.8}) and Theorem 12.16 in Protter and Morrey (1991),
\begin{eqnarray}\label{4.11}
&\int_{0}^t S^W(u)d\alpha(u)=\lim_{n\rightarrow\infty}S^W(u)d\eta^{+}_n(u) \ \ \mbox{and} \\
&\int_{0}^t S^W(u)d\beta(u)=\lim_{n\rightarrow\infty}S^W(u)d\eta^{-}_n(u), \ \  \mbox{a.s.} \ \ \forall{t\in I}.\nonumber
\end{eqnarray}
Thus
\begin{eqnarray}\label{4.11+}
&\int_{0}^t S^W(u)d\gamma(u)=\lim_{n\rightarrow\infty}S^W(u)d\eta_n(u) \ \ \mbox{and} \ \ 
\int_{0}^t S^W(u)|d\gamma|(u)\leq\\
&\int_{0}^t S^W(u)d\alpha(u)+\int_{0}^t S^W(u)d\beta(u)=\lim_{n\rightarrow\infty}
\int_{0}^t S^W(u)|d\eta_n|(u),  \ \  \mbox{a.s.} \ \ \forall{t\in I}.\nonumber
\end{eqnarray}
This together with (\ref{2.5})--(\ref{2.6}) gives
\begin{equation}\label{4.13}
V^{\pi}_{\lambda,\mu}(t)\geq\lim _{n\rightarrow\infty} V^{\tilde\pi_n}_{\lambda,\mu}(t)\geq 0, \ \ \forall t\in I.
\end{equation}
Thus $\pi\in\mathcal{A}^W(x,\lambda,\mu)$. By combining  Fatou's lemma together 
with Lemma \ref{lem4.1}, (\ref{4.10}) and (\ref{4.13})
we obtain
\begin{eqnarray*}
&R(\pi,\lambda,\mu)=\sup_{\tau\in \mathcal{T}^W_I}E^W[(Y^W({\tau})-V^{\pi}_{\lambda,\mu}({\tau}))^{+}]\leq\\
&\sup_{\tau\in \mathcal{T}^W_I}E^W[\lim_{n\rightarrow\infty}(Y^W({\tau})-V^{\tilde\pi_n}_{\lambda,\mu}({\tau}))^{+}]\leq\\
&\sup_{\tau\in \mathcal{T}^W_I}\lim \inf_{n\rightarrow\infty}E^W[(Y^W({\tau})-V^{\tilde\pi_n}_{\lambda,\mu}({\tau}))^{+}]\leq\\
&\lim_{n\rightarrow\infty} R(\tilde\pi_n,\lambda,\mu)=R(x,\lambda,\mu).
\end{eqnarray*}
Thus $R(\pi,\lambda,\mu)=R(x,\lambda,\mu)$ and the proof is completed.
\qed
\\
\\


\begin{thebibliography}{Bow75}


\itemsep=\smallskipamount

\bibitem{CPT}
\sc {Cvitanic, J., Pham, H. and Touzi, N.} (1999).
A closed-form solution to the problem of superreplication under transaction costs.
{\em Finance and Stochastics.} {\bf 3,} 35--54.

\bibitem{DK2}
{\sc Dolinsky, Ya., and Kifer, Yu.} (2008). Binomial approximations of
shortfall risk for game options. {\em Ann. Appl. Probab.} {\bf 18,}
1737--1770.

\bibitem{DK3}
{\sc Dolinsky, Ya., and Kifer, Yu.} (2010). Binomial approximations for barrier options
of Israeli style. {\em  Annals of Dynamic Games.} To appear.

\bibitem{Gua1}
{\sc Guasoni, P.} (2002A).
Risk Minimization under Transaction Costs.
{\em Finance and Stochastics.} {\bf 6,} 91--113.

\bibitem{Gua2}
{\sc Guasoni, P.} (2002B).
Optimal Investment with Transaction Costs and without Semimartingales.
{\em Ann. Appl. Probab.} {\bf 12,} 1227--1246.

\bibitem{JLR}
{\sc Jakubenas, P., Levental, S. and Ryznar, M.} (2003).
The super-replication problem via probabilistic methods.
{\em Ann. Appl. Probab.} {\bf 13,} 742-–773.

\bibitem{Ka1}
{\sc Kamizono, K.} (2001). Partial Hedging under Proportional Transaction Costs.
{\em P.hD dissertation}.

\bibitem{Ka2}
{\sc Kamizono, K.} (2003).
Partial hedging under transaction costs.
{\em SIAM J.Control Optimiation}. {\bf 5,} 1545--1558.

\bibitem{Ki2}
{\sc Kifer, Yu.} (2006). Error estimates for binomial approximiations
 of game options. {\em Ann. Appl. Probab.} {\bf 16,} 984-1033.

\bibitem{LS}
{\sc Levental, S. and Skorohod, A. V.} (1997).
On the possibility of hedging options in the presence of transaction costs.
{\em Ann. Appl. Probab.} {\bf 7,} 410-443.

\bibitem{M}
{\sc Mulinacci, S.} (2003). The efficient hedging problem for American options.
{\em Finance and Stochastics.} To appear.

\bibitem{PS} Peskir, G., and Shiryaev, A. N. (2006).
Optimal Stopping and Free-Boundary Problems.
{\em ETH Zurich, Birkhauser.}

\bibitem{PM}
{\sc Protter, M. H. and Morrey, C. B.} (1991). A First Course in Real Analysis.
{\em Springer-Verlag}.

\bibitem{SSC}
{\sc Soner, H. M., Shreve, S. E. and Cvitanic, J.} (1995).
There is no nontrivial hedging portfolio for option pricing with transaction costs.
{\em Ann. Appl. Probab.} {\bf 5,} 327--355.


\bibitem{T} 
{\sc Trivellato, B.} (2009).
Replication and shortfall risk in a binomial model
with transaction costs. {\em Math Meth Oper Res.} {\bf 69,} 1--26.

\end{thebibliography}
\end{document}